\title{On the \DIS{} meta-problem and applications to the complexity of identifying problems on graphs.}
\author{Barbero Florian, Isenmann Lucas, Thiebaut Jocelyn \\ LIRMM, Université de Montpellier, \\161 rue Ada, 34095, Montpellier, France \\{\tt \{florian.barbero, lucas.isenmann, jocelyn.thiebaut\}@lirmm.fr}}
\theoremstyle{plain}
\newtheorem{theorem}{Theorem}
\newtheorem{lemma}{Lemma}
\newtheorem{corollary}[lemma]{Corollary}
\newtheorem{proposition}[theorem]{Proposition}
\theoremstyle{definition}
\newtheorem{definition}{Definition}
\newenvironment{qproblem}[3]{
\textsc{#1}\\
\textbf{Input:} #2\\
\textbf{Output:} #3
}
\newcommand{\NP}{{\sf NP}}
\newcommand{\FPT}{{\sf FPT}}
\newcommand{\W}[1]{{\sf W[#1]}}
\newcommand{\aDIS}{{\sc DIS}}
\newcommand{\DIS}{{\sc Distance Identifying Set}}
\newcommand{\aIC}{{\sc IC}}
\newcommand{\IC}{{\sc Identifying Code}}
\newcommand{\aLD}{{\sc LD}}
\newcommand{\LD}{{\sc Locating Dominating Set}}
\newcommand{\aMD}{{\sc MD}}
\newcommand{\MD}{{\sc Metric Dimension}}
\newcommand{\HS}{{\sc Hitting Set}}
\newcommand{\layered}{layered}
\DeclareMathOperator{\true}{true}
\DeclareMathOperator{\false}{false}
\newcommand{\intset}[1]{[\![#1]\!]}
\newcommand{\inftyset}{\intset{\infty}}
\newcommand{\len}[1]{b_{#1}}
\newcommand{\HIC}{H_r}
\newcommand{\BIC}{B_r}
\newcommand{\CIC}{C_r}
\newcommand{\Set}{\mathcal{S}}
\newcommand{\ov}{\bar{v}}
\begin{document}

\maketitle
\begin{abstract}
 Numerous problems consisting in identifying vertices in graphs using distances are useful in domains such as network verification and graph isomorphism. Unifying them into a meta-problem may be of main interest. We introduce here a promising solution named {\sc Distance Identifying Set}. The model contains {\sc Identifying Code} ({\sc IC}), {\sc Locating Dominating Set} ({\sc LD}) and their generalizations $r$-{\sc IC}{} and $r$-{\sc LD} where the closed neighborhood is considered up to distance $r$. It also contains {\sc Metric Dimension} ({\sc MD}) and its refinement $r$-{\sc MD} in which the distance between two vertices is considered as infinite if the real distance exceeds $r$. Note that while \mbox{{\sc IC}{} = $1$-{\sc IC}{}} and {\sc LD} = $1$-{\sc LD}, we have {\sc MD} = $\infty$-{\sc MD}; we say that {\sc MD} is not \textit{local}.\medskip

 In this article, we prove computational lower bounds for several problems included in \linebreak {\sc Distance Identifying Set} by providing generic reductions from \textsc{(Planar) \HS{}} to the meta-problem. We mainly focus on two families of problem from the meta-problem: the first one, called \textit{bipartite gifted local}, contains $r$-{\sc IC}{}, $r$-{\sc LD} and $r$-{\sc MD} for each positive integer $r$ while the second one, called \textit{$1$-\layered{}}, contains {\sc LD}, {\sc MD} and $r$-{\sc MD} for each positive integer $r$. We have:
 
 \begin{itemize}
  \item the $1$-\layered{} problems are \NP-hard even in bipartite apex graphs,
  \item the bipartite gifted local problems are \NP-hard even in bipartite planar graphs,
  \item assuming ETH, all these problems cannot be solved in $2^{o(\sqrt{n})}$ when restricted to bipartite planar or apex graph, respectively, and they cannot be solved in $2^{o(n)}$ on bipartite graphs,
  \item even restricted to bipartite graphs, they do not admit parameterized algorithms in $2^{\mathcal{O}(k)} \cdot n^{\mathcal{O}(1)}$ except if $\W{0} = \W{2}$. Here $k$ is the solution size of a relevant identifying set.
 \end{itemize}
 In particular, {\sc Metric Dimension} cannot be solved in $2^{o(n)}$ under ETH, answering a question of Hartung in~\cite{HartungN13}.
\end{abstract}

\newpage

\section{Introduction and Corresponding Works}\label{sec:intro}
Problems consisting in identifying each element of a combinatorial structure with a hopefully small number of elements have been widely investigated.
Here, we study a meta identification problem which generalizes three of the most well-known identification problems in graphs, namely \IC{} (\aIC{}), \LD{} (\aLD{}) and \MD{} (\aMD{}).
These problems are used in network verification \cite{BampasBDGKP15,BeerliovaEEHHMR06}, fault-detection in networks \cite{KarpovskyCL98,UngrangsiTS04}, graph isomorphism \cite{Babai80} or logical definability of graphs \cite{KimPSV05}.
The versions of these problems in hypergraphs have been studied under different names in  \cite{BollobasS07}, \cite{BONDY1972201} and \cite{CharbitCCHL08}.\medskip

\noindent Given a graph $G$ with vertex set $V$, the classical identifying sets are defined as follows:
\begin{itemize}
 \item \aIC: Introduced by Karposky et al. \cite{KarpovskyCL98}, a set $C$ of vertices of $G$ is said to be an \emph{identifying code} if none of the sets $N[v] \cap C$ are empty, for $v \in V$ and they are all distinct.
       
 \item \aLD: Introduced by Slater \cite{Slater87,slater1988dominating}, a set $C$ of vertices of $G$ is said to be a \emph{locating-dominating set} if none of the sets $N[v] \cap C$ are empty, for $v \in V\setminus C$ and they are all distinct.
       When not considering the dominating property ($N[v] \cap C$ may be empty), these sets have been studied in \cite{Babai80} as distinguing sets and in \cite{KimPSV05} as sieves.
       
 \item \aMD: Introduced independently by Harary et al. \cite{harary1976metric} and Slater \cite{Slater75}, a set $C$ of vertices of $G$ is said to be a \emph{resolving set} if $C$ contains one vertex from each connected component of $G$ and, for every distinct vertices $u$ and $v$ of $G$, there exists a vertex $w$ of $C$ such that $d(w,u) \neq d(w,v)$.
       The \emph{metric dimension} of $G$ is the minimum size of its resolving sets.
\end{itemize}

The corresponding minimization problems of the previous identifying sets are defined as follows: given a graph $G$, compute a suitable set $C$ of minimal size, if one exists. In this paper, we mainly focus on the computational complexity of these minimization problems.

\subparagraph*{Known results.} A wide collection of \NP-hardness results has been proven for the problems.

For \aIC{} and \aLD{}, the minimization problems are indeed \NP-hard \cite{CohenHLZ99,ColbournSS87}.
Charon et al. showed the \NP-hardness when restricted to bipartite graphs \cite{CharonHL03}, while Auger showed it for planar graphs with arbitrarily large girth \cite{Auger10}. For trees, there exists a linear algorithm \cite{Slater87}.

\MD{} is also \NP-hard, even when restricted to Gabriel unit disk graphs \cite{GareyJ79,HoffmannW12}.
Epstein et al. \cite{EpsteinLW15} showed that \aMD{} is polynomial on several classes as trees, cycles, cographs, partial wheels, and graphs of bounded cyclomatic number, but it remains \NP-hard on split graphs, bipartite graphs, co-bipartite and line graphs of bipartite graphs.
Additionally, Diaz et al. \cite{DiazPSL12} proved a quite tight separation: the problem is polynomial on outerplanar graphs whereas it remains \NP-hard on bounded degree planar graphs.

In a recent publication, Foucaud et al. \cite{FoucaudMNPV17} also proved the \NP-hardness of the three problems restricted to interval graphs and permutation graphs.\medskip

These notions may be considered under the parameterized point of view; see \cite{DowneyF99} for a comprehensive study of Fixed Parameter Tractability (\FPT). In the following, the parameter $k$ is chosen as the solution size of a suitable set.

For \aIC{} and \aLD{}, the parameterized problems are clearly \FPT{} since the number of vertices of a positive instance is bounded by $2^k+k$ ($k$ vertices may characterize $2^k$ neighbors).

Such complexity is not likely to be achievable in the case of \aMD{}, since it would imply $\W{2} = \FPT~(= \W{0})$. Indeed, Hartung et al. \cite{Hartung14,HartungN13} showed \aMD{} is $\W{2}$-hard for bipartite subcubic graphs. The problem is however \FPT{} on families of graphs with degree $\Delta$ growing with the number of vertices because the size $k$ of a resolving set must satisfy $log_3(\Delta) < k$.
Finally, Foucaud et al. \cite{FoucaudMNPV17} provided a \FPT{} algorithm on interval graphs.

\subparagraph*{Our contributions.}

\begin{figure}
 \centering
 \makebox[\textwidth][c]{
  \begin{tabular}{c||c|c|c}
                             & $1$-\layered{}                                                                                                          & $r$-local $0$-\layered{}  & $r$-local                 \\
                             & problems                                                                                                                & problems                  & problems                  \\ \hline \rule{0pt}{5ex}
   
   \begin{minipage}{0.1\textwidth} \centering \textsc{Planar} \HS{}\end{minipage} & \begin{minipage}{0.22\textwidth} \centering \NP-hard on\\ bipartite apex graphs \end{minipage}                                                                                               & \begin{minipage}{0.24\textwidth} \centering \NP-hard on\\ bipartite planar graphs \end{minipage} & \begin{minipage}{0.27\textwidth} \centering (bipartite) planar gadget $\Rightarrow$ \NP-hard on (bipartite) planar graphs \end{minipage} \\[3ex] \hdashline
   \rule{0pt}{3.5ex}
   with ETH                  & \multicolumn{3}{c}{no algorithm running in $2^{\mathcal{O}(\sqrt{n})}$ time for relevant classes of graphs.}                                                                    \\ \hline \rule{0pt}{5ex}
   
   \HS{}                     & \multicolumn{2}{c|}{\NP-hard on bipartite graph}                                                                        & \begin{minipage}{0.25\textwidth} \centering (bipartite) gadget $\Rightarrow$\\ \NP-hard on (bipartite) graphs \end{minipage}                             \\[3.5ex] \hdashline
   \rule{0pt}{3.5ex}
   with ETH                  & \multicolumn{3}{c}{no algorithm running in $2^{\mathcal{O}(n)}$ for (bipartite) graphs.}                                                                                        \\
   with $\W{2} \neq \W{0}$   & \multicolumn{3}{c}{no parameterized algorithm in $2^{\mathcal{O}(k)} \cdot n^{\mathcal{O}(1)}$ for (bipartite) graphs.}                                                         \\ \hline
  \end{tabular}
 }
 \caption{The computational lower bounds implied by our generic reductions.}
\end{figure}
In order to unify the previous minimization problems, we introduce the concept of \textit{distance identifying functions}. Given a distance identifying function $f$ and a value $r$ as a positive integer or infinity, the \DIS{} meta-problem consists in finding a minimal sized $r$-dominating set which distinguishes every couple of vertices of an input graph thanks to the function $f$. Here, we mainly focus on two natural subfamilies of problems of \DIS{} named \textit{local}, in which a vertex cannot discern the vertices outside of its $i$-neighborhood, for $i$ a fixed positive integer, and \textit{$1$-\layered{}}, where a vertex is able to separate its open neighborhood from the distant vertices.

\smallskip
With this approach, we obtain several computational lower bounds for problems included in \DIS{} by providing generic reductions from \textsc{(Planar) \HS{}} to the meta-problem. The reductions rely on the set/element-gadget technique, the noteworthy adaptation of the clause/variable-gadget technique from \textsc{SAT} to \HS{}.\medskip

As we provide a $1$-\layered{} generic gadget, the $1$-\layered{} reductions operate without condition. For local problems, the existence of a local gadget is not always guaranteed. Thus, a local reduction operates only if a local gadget is provided. However, the local planar reduction is slightly more efficient than its $1$-\layered{} counterpart: it indeed implies computational lower bounds for planar graphs whereas the $1$-\layered{} reduction requires an auxiliary apex, limiting the consequences to apex graphs.

The reductions in general graphs are designed to exploit the $\W{2}$-hardness of \HS{} parameterized by the solution size $k_{HS}$ of an hitting set, hereby using:\medskip
\begin{theorem}[folklore] \label{thm:w2bound}
 Let $n_{HS}$ and $m_{HS}$ be the number of elements and sets of an \HS{} instance, and $k_{HS}$ be its solution size. A parameterized problem with parameter $k$ admitting a reduction from \HS{} verifying $k = \mathcal{O}(k_{HS} + log(n_{HS} + m_{HS}))$ does not have a parameterized algorithm running in $2^{\mathcal{O}(k)} \cdot n^{\mathcal{O}(1)}$ time except if $\W{2} = \FPT{}$.
\end{theorem}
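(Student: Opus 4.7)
The plan is a straightforward composition argument built on the classical $\W{2}$-hardness of \HS{} parameterized by the solution size $k_{HS}$ (due to Downey and Fellows). Producing any FPT algorithm for \HS{} under this parameterization would collapse $\W{2}$ to $\W{0} = \FPT{}$, so it suffices to show that a $2^{\mathcal{O}(k)} \cdot n^{\mathcal{O}(1)}$ algorithm for the target problem, combined with the hypothetical reduction, yields exactly such an FPT algorithm.

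Assume for contradiction that the target parameterized problem admits both a reduction from \HS{} with $k = \mathcal{O}(k_{HS} + \log(n_{HS}+m_{HS}))$ and an algorithm running in $2^{\mathcal{O}(k)} \cdot n^{\mathcal{O}(1)}$ time. Since the reduction is polynomial in the size of its input, the instance it outputs has size $n$ bounded by some polynomial in $n_{HS}+m_{HS}$. Concatenating the reduction with the algorithm therefore solves the original \HS{} instance in time
\[ 2^{\mathcal{O}(k_{HS} + \log(n_{HS}+m_{HS}))} \cdot (n_{HS}+m_{HS})^{\mathcal{O}(1)}. \]

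The key step — and the only one requiring a tiny bit of care — is absorbing the logarithmic slack inside the parameter into the polynomial factor via the identity $2^{\mathcal{O}(\log N)} = N^{\mathcal{O}(1)}$. Doing so rewrites the bound as $2^{\mathcal{O}(k_{HS})} \cdot (n_{HS}+m_{HS})^{\mathcal{O}(1)}$, which is exactly the form of an FPT algorithm for \HS{} with parameter $k_{HS}$. By the assumed $\W{2}$-hardness of \HS{}, such an algorithm cannot exist unless $\W{2} = \FPT{} = \W{0}$, yielding the desired contradiction and concluding the proof. No obstacle beyond this bookkeeping is expected: the statement is folklore precisely because the logarithmic-slack trick is the only substantive ingredient, everything else being a direct composition of a polynomial-time reduction with a hypothetical FPT solver.
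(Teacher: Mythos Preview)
Your proposal is correct and follows essentially the same approach as the paper's proof: compose the reduction with the hypothetical $2^{\mathcal{O}(k)}\cdot n^{\mathcal{O}(1)}$ algorithm, absorb the $2^{\mathcal{O}(\log(n_{HS}+m_{HS}))}$ factor into the polynomial, and invoke the $\W{2}$-hardness of \HS{}. The paper's version is terser but the argument is identical.
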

\begin{proof}
 Given a reduction from \textsc{Hitting Set} to a parameterized problem $\Pi$ such that the reduced parameter satisfies $k = \mathcal{O}(k_{HS} + log(n_{HS} + m_{HS}))$ and the size of the reduced instance verifies $n = (n_{HS} + m_{HS})^{\mathcal{O}(1)}$, an algorithm for $\Pi$ of running time $2^{\mathcal{O}(k)} \cdot n^{\mathcal{O}(1)}$ is actually an algorithm for \textsc{Hitting Set} of running time $2^{\mathcal{O}(k_{HS})} \cdot (n_{HS} + m_{HS})^{\mathcal{O}(1)}$, meaning that \HS{} is \FPT, a contradiction to its $\W{2}$-hardness (otherwise $\W{2} = \FPT$).
\end{proof}
Hence, as each gadget contributes to the resulting solution size of a distance identifying set, we set up a binary compression of the gadgets to limit their number to the logarithm order. From the best of our knowledge, this merging gadgets technique has never been employed.\medskip

\medskip
The organization of the paper is as follows. After a short reminder of the computational properties of \HS{}, Section~\ref{sec:def} contains the definitions of distance identifying functions and sets, allowing us to precise the computation lower bounds we obtain. The Section~\ref{sec:sup} designs the supports of the reductions as \textit{distance identifying graphs} and \textit{compressed graph}. Finally, the gadgets needed for the reductions to apply are given in Section~\ref{sec:cis} as well as the proofs of the main theorems.

\section{Definition of the Meta-Problem and Related Concepts}\label{sec:def}

\subsection{Preliminaries}
\subparagraph*{Notations.}
Throughout the paper, we consider simple non oriented graphs.

Given a positive integer $n$, the set of positive integers smaller than $n$ is denoted by $\intset{n}$. By extension, we define $\inftyset = \mathbb{N}_{>0} \cup \{\infty\}$.
Given two vertices $u,v$ of a graph $G$, the distance between $u$ and $v$ corresponds to the number of vertices in the shortest path between $u$ and $v$ and is denoted  $d(u,v)$. The \textit{open neighborhood of $u$} is denoted by $N(u)$, its \textit{closed neighborhood} is $N[u] = N(u) \cup \{u\}$, and for a value $r \in \inftyset$, the \textit{$r$-neighborhood} of $u$ is $N_r[u]$, that is the set of vertices at distance less than $r+1$ of $u$.
For $r = \infty$, the $\infty$-neighborhood of $u$ is the set of vertices in the same connected component than $u$.
We recall that a subset $D$ of $V$ is called an \textit{$r$-dominating set} of $G$ if for all vertices $u$ of $V$, the set $N_r[u] \cap D$ is non-empty. Thus an $\infty$-dominating set of $G$ contains at least a vertex for each connected component of $G$.

Given two subsets $X$ and $Y$ of $V$, the distance $d(X,Y)$ corresponds to the value $d(X,Y) = \min \{d(x,y) ~ | ~ x \in X, ~y \in Y\}$.
For a vertex $u$, we will also use $d(u,X)$ and $d(X,u)$, defined similarly.
The \textit{symmetric difference} between $X$ and $Y$ is denoted by $X \, \Delta \,  Y$, and the \textit{$2$-combination} of a set $X$ is denoted $\mathcal{P}_2(X)$

Given two graphs $G = (V_G,E_G)$ and $H = (V_H,E_H)$, $H$ is an \textit{induced subgraph of $G$} if $V_H \subseteq V_G$ and for all vertices $u$ and $v$ of $V_H$, $(u,v) \in E_G$ if and only if $(u,v) \in E_H$. We denote $H = G[V_H]$ and $V_G \setminus V_H$ by $V_{G \setminus H}$. Symmetrically, $G$ is an \textit{induced supergraph of $H$}.

\paragraph*{The {\sc (Planar)} \HS{} problem.}\label{sub:hs}

Consider a universe of $n$ elements denoted $\Omega = \{u_i \; | \; i \in \intset{n}\}$ and a set of $m$ non-empty subsets of $\Omega$ denoted $\mathcal{S} = \{S_i \; | \; i \in \intset{m}\}$ such that every element belongs to at least a subset. Then, a subset of $\Omega$ intersecting every set of $\mathcal{S}$ is called an \textit{hitting set} of $\mathcal{S}$:\medskip
\medskip
\begin{qproblem}
 {\HS{}}
 {A universe $\Omega$ and a set $\mathcal{S}$ of non-empty subsets of $\Omega$ whose union covers $\Omega$.}
 {A minimal-sized hitting set $C$ of $\mathcal{S}$, \small{i.e. a subset of $\Omega$ satisfying $\forall S_i \in \mathcal{S}$, $S_i \cap C \neq \emptyset$.}}
\end{qproblem}
\noindent The parameterized version \HS{(k)} decides if there exists a hitting set of size $k$.

\begin{theorem}[R.G. Downey and M.R Fellows \cite{DowneyF99}] \label{thm:hs}
 \HS{} cannot be solved in $2^{o(n)}$ time under ETH even if $m = \mathcal{O}(n)$. Moreover, \HS{(k)} is $\W{2}$-hard.
\end{theorem}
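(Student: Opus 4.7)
The plan is to address the two parts of Theorem~\ref{thm:hs} separately: the ETH-based subexponential lower bound on the one hand, and the $\W{2}$-hardness of the parameterized version on the other.

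For the ETH bound, I would start from \tSAT{} on $n$ variables and $m$ clauses. Invoking the Sparsification Lemma of Impagliazzo--Paturi--Zane, one may assume $m = \mathcal{O}(n)$ while preserving the $2^{\Omega(n)}$ lower bound under ETH. I would then apply a textbook reduction to \HS{}: introduce two elements $t_i$ and $f_i$ per variable $x_i$; add the set $\{t_i, f_i\}$ for each variable (to force a truth assignment); and for each clause, add the set of its literal-tokens (using $t_i$ for the literal $x_i$ and $f_i$ for $\neg x_i$). Setting the target size to $K = n$, any hitting set of size $\leq K$ must hit each $\{t_i,f_i\}$ and therefore contains exactly one element per variable, yielding a truth assignment that satisfies every clause; conversely, a satisfying assignment gives such a hitting set. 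The resulting instance has $n_{HS} = 2n$ elements and $m_{HS} = n + m = \mathcal{O}(n) = \mathcal{O}(n_{HS})$ sets, so a $2^{o(n_{HS})}$ algorithm for \HS{} would translate into a $2^{o(n)}$ algorithm for \tSAT{}, contradicting ETH.

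For the $\W{2}$-hardness of \HS{($k$)}, the natural approach is a reduction from \textsc{Dominating Set} parameterized by the solution size, which is itself one of the canonical $\W{2}$-complete problems of the Downey--Fellows hierarchy. Given an instance $(G,k)$ of \textsc{Dominating Set}, I would build the \HS{} instance with universe $\Omega = V(G)$ and family $\mathcal{S} = \{N[v] \mid v \in V(G)\}$. A set $D \subseteq V(G)$ is a dominating set of $G$ if and only if $D \cap N[v] \neq \emptyset$ for every $v$, which is exactly the condition for $D$ to be a hitting set of $\mathcal{S}$. The parameter is preserved exactly ($k_{HS} = k$), so $\W{2}$-hardness transfers directly.

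The main obstacle is not the reductions themselves but the two black-box results they rely on. The Sparsification Lemma is genuinely non-trivial, and is essential for getting $m_{HS} = \mathcal{O}(n_{HS})$ rather than just $m_{HS} = \mathrm{poly}(n_{HS})$; without it one only obtains the weaker lower bound $2^{o(n_{HS}/\log n_{HS})}$ or similar. Likewise the $\W{2}$-completeness of \textsc{Dominating Set} itself rests on a more delicate chain of reductions from \textsc{Weighted CNF-Satisfiability} through \textsc{Weighted $q$-CNF-SAT}, which I would cite from the Downey--Fellows framework rather than reprove. Once both ingredients are in hand, the two reductions above close the theorem.
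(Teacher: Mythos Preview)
Your outline is correct and standard: the Sparsification Lemma followed by the variable/clause reduction yields the ETH bound with $m_{HS}=\mathcal{O}(n_{HS})$, and the closed-neighborhood reduction from \textsc{Dominating Set} transfers $\W{2}$-hardness with the parameter preserved. However, there is nothing to compare against here: the paper does not prove Theorem~\ref{thm:hs} at all. It is stated with attribution to Downey and Fellows~\cite{DowneyF99} and used as a black box throughout (in the proofs of Theorems~\ref{thm:CPlowerbound}--\ref{thm:Glowerbound}), so your proposal goes strictly beyond what the paper supplies.
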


\HS{} may be translated into a dominating problem on bipartite graphs. Given an instance $(\Omega, \mathcal{S})$ of \HS{}, let us define $\phi(\Omega, \mathcal{S}) = (V_{\Omega} \cup V_{\mathcal{S}},E)$ as the bipartite graph of size $n+m$ such that for each $i \in \intset{n}$, there exists a vertex $v^{\Omega}_i$ in $V_{\Omega}$, for each $j \in \intset{m}$, there exists a vertex $v^{\mathcal{S}}_j$ in $V_{\mathcal{S}}$, and the edge $(v^{\Omega}_i,v^{\mathcal{S}}_j)$ is present in $E$ if and only if the element $u_i$ belongs to the subset $S_j$. Henceforth, a hitting set of $ \mathcal{S}$ is equivalent to a subset $C$ of $V_{\Omega}$ that dominates $V_{\mathcal{S}}$. We call $\phi(\Omega, \mathcal{S})$ \textit{the associated graph of $(\Omega, \mathcal{S})$}.\medskip

\begin{qproblem}{\textsc{Planar} \HS{}}
 {An instance $(\Omega, \mathcal{S})$ of \HS{} such that $\phi(\Omega, \mathcal{S})$ is planar.}
 {A hitting set $C$ of $\mathcal{S}$ of minimal size.}
\end{qproblem}
\noindent We also consider the parameterized version \textsc{Planar} \HS{(k)} of the latter problem.

\begin{theorem}[folklore]
 \label{thm:planarHS}
 There exists a reduction from \textsc{SAT} to \textsc{Planar} \HS{(n)} \linebreak producing associated graphs of quadratic size in the number $n$ of variables of the instances of \textsc{SAT}. Thus \textsc{Planar} \HS{} cannot be solved in $2^{o(\sqrt{n})}$ under ETH even if $m = \mathcal{O}(n)$.
\end{theorem}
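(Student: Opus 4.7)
The plan is to compose the standard reductions \textsc{SAT} $\to$ \textsc{3-SAT} $\to$ \textsc{Planar 3-SAT} $\to$ \textsc{Planar HS} into a single polynomial-time reduction with overall quadratic blow-up. Starting from a SAT instance $\Phi$ on $n$ variables, I would first invoke the textbook linear reduction to 3-SAT and the Impagliazzo-Paturi-Zane sparsification lemma to assume $m = \mathcal{O}(n)$. Lichtenstein's classical crossover construction then yields an equivalent Planar 3-SAT instance $\Phi'$ of size $\mathcal{O}(n^2)$: the incidence graph of $\Phi$ has $\mathcal{O}(n)$ vertices and therefore admits a rectilinear drawing with $\mathcal{O}(n^2)$ crossings, each of which is replaced by a constant-size planar crossover gadget.

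Next I would translate $\Phi'$ into a Planar HS instance $(\Omega, \Set)$. For each variable $y_i$ of $\Phi'$, introduce two elements $t_i, f_i \in \Omega$ and a choice set $\{t_i, f_i\} \in \Set$; for each clause $C_j$, introduce a clause set containing $t_i$ (resp. $f_i$) for every positive (resp. negative) occurrence of $y_i$ in $C_j$. The choice sets being pairwise disjoint, every hitting set contains at least the number $n'$ of variables of $\Phi'$, and equality holds iff $\Phi'$ is satisfiable: the chosen element of $\{t_i, f_i\}$ encodes the value of $y_i$, and the clause sets are hit exactly when that assignment satisfies $\Phi'$.

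To enforce planarity of $\phi(\Omega, \Set)$, I would replace each variable $y_i$ of degree $d_i$ by a cyclic chain of $d_i$ copies $(t_i^{(k)}, f_i^{(k)})_{1 \le k \le d_i}$, each carrying its own choice set and linked to the next by size-two equivalence sets $\{t_i^{(k)}, f_i^{(k+1)}\}$ and $\{f_i^{(k)}, t_i^{(k+1)}\}$, which together force $t_i^{(k)} \leftrightarrow t_i^{(k+1)}$ in any minimum hitting set. This chain is planar and can be embedded inside a small disk replacing the former vertex $y_i$; by ordering the copies along the cyclic arrangement of clause edges around $y_i$, each clause edge can be routed to its prescribed copy (to $t_i^{(k)}$ or $f_i^{(k)}$ according to polarity) without new crossings. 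The overhead is proportional to the total degree in $\Phi'$, so the associated graph has size $N = \mathcal{O}(n^2)$.

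The ETH conclusion is then immediate: under ETH, SAT with $m = \mathcal{O}(n)$ requires $2^{\Omega(n)}$ time, so a hypothetical $2^{o(\sqrt{N})}$-time algorithm for Planar HS on instances of vertex-size $N = \mathcal{O}(n^2)$ would yield a $2^{o(n)}$-time algorithm for SAT, a contradiction. The main technical obstacle is the cyclic equivalence gadget around each variable: one must verify that it is planar inside the local disk and that the equivalence sets indeed force $t_i^{(k)} \leftrightarrow t_i^{(k+1)}$ in every \emph{optimum} hitting set, rather than only in larger ones. The remaining pieces---Lichtenstein's crossover count and the straightforward hitting-set correspondence---are routine book-keeping.
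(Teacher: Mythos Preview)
Your proposal is correct and reaches the same conclusion, but the route differs from the paper's in one place worth noting.

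Both arguments pass through a planar variant of \textsc{SAT} and then translate literals into pairs of elements with two-element choice sets, so that a minimum hitting set encodes a truth assignment. The paper, however, does not go through Lichtenstein's generic \textsc{Planar 3-SAT} plus your cyclic variable-chain gadget. Instead it invokes directly a refined variant called \textsc{Separate Simple Planar SAT} (from Tippenhauer's thesis), whose incidence structure is already such that the straightforward literal translation yields a planar associated graph with no further gadgetry; the sparsification lemma then keeps the reduction linear from that variant, and the $2^{o(\sqrt{n})}$ lower bound is inherited from the known bound for \textsc{Separate Simple Planar SAT}. Your approach is more self-contained---you rebuild the planarization and the polarity-consistent variable cycle by hand---at the cost of the extra equivalence-set gadget and the need to argue its correctness in optimum hitting sets (which you flag). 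The paper's approach is shorter because it outsources exactly that work to the cited planar \textsc{SAT} variant. Either way the quadratic blow-up and the ETH consequence are identical.
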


\begin{proof}
 Let $\Phi$ be the set of the $n$ variables present in the set $\mathcal{C}$ of clauses of an instance of \textsc{SAT}. For each variable $\varphi$ of $\Phi$, we add two fresh elements $u_{\varphi}$ and $\bar{u}_{\varphi}$ to the universe $\Omega_{\Phi}$ representing the two possible affectations of variable $\varphi$, and we create a set $S_{\varphi} = \{u_{\varphi}, \bar{u}_{\varphi}\}$ that we append to the set $\mathcal{S}_{\mathcal{C}}$ of subsets of $\Omega_{\Phi}$. The independence of the sets $S_{\varphi}$ implies that the existence a hitting set of size strictly smaller than $n$ is impossible. Reciprocally, a potential hitting set of size exactly $n$ must define an affectation of the $n$ variables of $\Phi$. Finally, to determine if an affectation satisfies the set of clauses $\mathcal{C}$, for each clause $c \in \mathcal{C}$ we append to $\mathcal{S}_{\mathcal{C}}$ the set of elements representing each literal present in the clause $c$. The equivalence between the satisfiability of $\mathcal{C}$ and the existence of a hitting set of $\mathcal{S}_{\mathcal{C}}$ of size $n$ is immediate by construction.
 It remains to guarantee the planarity of the associated graph $\phi(\Omega_{\Phi},\mathcal{S}_{\mathcal{C}})$ . To do so, we actually apply the reduction on a restriction of \textsc{SAT} named \textsc{Separate Simple Planar SAT} (See \cite{tippenhauer} for a precise definition). Adding the \textit{sparsifying lemma} from \cite{ImpagliazzoPZ01}, the reduction produces a graph of size linear in $n$, preserving the computational lower bound of \textsc{Separate Simple Planar SAT}. In particular, the latter problem is not solvable in $2^{o(\sqrt{n})}$ under ETH,
\end{proof}

\subsection{The \DIS{} meta-problem}

Given a graph $G = (V,E)$ and $r \in \inftyset$, the classical identifying sets may be rewritten:
\begin{itemize}
 \item $r$-\aIC{}: a subset $C$ of $V$ is a $r$-identifying code of $G$ if it is an $r$-dominating set and for every distinct vertices $u, v$ of $V$, a vertex $w$ in $C$ verifies $w \in N_r[u] \, \Delta \, N_r[v]$.
       
 \item $r$-\aLD{}: a subset $C$ of $V$ is a $r$-locating dominating set of $G$ if it is an $r$-dominating set and for every distinct vertices $u, v$ of $V$, a vertex $w$ in $C$ verifies $w \in (N_r[u] \, \Delta \, N_r[v]) \cup \{u,v\}$.

 \item $r$-\aMD{}: a subset $C$ of $V$ is a $r$-resolving set of $G$ if it is an $r$-dominating set and for every distinct vertices $u, v$ of $V$, a vertex $w$ in $C$ verifies $w \in N_r[u] \cup N_r[v]$ and $d(u,w) \neq d(v,w)$.
\end{itemize}

A pattern clearly appears: the previous identifying sets only deviate on the criterion that the vertex $w$ must verify. The pivotal idea is to consider an abstract version of the criterion which does not depend on the input graph. Hence:

\begin{definition}[identifying function] A function $f$ of type: $G \to (V \times \mathcal{P}_2(V) \to \{ \true, \false\})$, is called \textit{an identifying function}. Given three vertices $u$, $v$ and $w$ of a graph $G$ such that $u \neq v$, we write $f_G[w](u,v)$ to get the resulting boolean. The notation $\mathcal{P}_2(V)$ implies that $f_G$ is symmetric, that is $f_G[w](u,v) = f_G[w](v,u)$.
\end{definition}

We need to require some useful properties on identifying functions to produce generic results. By mimicking the classical identifying sets, the main property we consider is that a vertex cannot distinguish two vertices at the same distance from it. Then:

\begin{definition}[distance function] \label{def:dif}
 A \textit{distance identifying function $f$} is an identifying function such that for every graph $G$ and all vertices $u$,$v$ and $w$ of $G$ with $u \neq v$: \begin{itemize}
  \item[($\alpha$)] \label{pdistance} $f_G[w](u,v)$ is $\false$ when $d(u,w) = d(v,w)$.
 \end{itemize}
\end{definition}

\noindent Besides this mandatory criterion, we suggest two paradigms related to the neighborhood of a vertex. Let $i \in \inftyset$. First, we may restrain the range of a vertex to its $i$-neighborhood: a vertex should not distinguish two vertices if they do not lie in its $i$-neighborhood but it should always distinguish them whenever exactly one of them lies to that $i$-neighborhood. Reciprocally, we may ensure that a vertex could distinguish the vertices of its $i$-neighborhood: a vertex should distinguish a vertex belonging to its $i$-neighborhood from all the other vertices, assuming the distances are different. Formally, we have:

\begin{definition}[$i$-local function] \label{def:lif}
 For $i \in \inftyset$, an \textit{$i$-local identifying function $f$} is an identifying function such that for every graph $G$ and all vertices $u$, $v$, $w$ of $G$ with $u \neq v$: \begin{enumerate}
  \item[($\beta_1$)] \label{plocal1} $f_G[w](u,v)$ is $\true$ when $d(u,w) \leq i < d(v,w)$ or, symmetrically, $d(v,w) \leq i < d(u,w)$.
  \item[($\beta_2$)] \label{plocal2} $f_G[w](u,v)$ is $\false$ when $i < \min\{d(u,w), d(v,w)\}$.
 \end{enumerate}
\end{definition}

\begin{definition}[$i$-\layered{} function] \label{def:cif}
 For $i \in \inftyset$, an \textit{$i$-\layered{} identifying function $f$} is an identifying function such that for every graph $G$ and all vertices $u$,$v$,$w$ of $G$ with $u \neq v$: \begin{enumerate}
  \item[($\gamma$)] \label{playered} $f_G[w](u,v)$ is $\true$ when $\min \{d(u,w), d(v,w)\} \leq i$ and $d(u,w) \neq d(v,w)$.
 \end{enumerate}
\end{definition}

In the following, given an identifying function $f$ and three vertices $u$, $v$, $w$ of a graph $G$, we say that $w$ $f$-distinguishes $u$ and $v$ if and only if $f_G[w](u,v)$ is $\true$.
By extension, given three vertex sets $C$, $X$ and $Y$ of $G$, we say that $C$ $f$-distinguishes $X$ and $Y$ if for every $u$ in $X$ and $v$ in $Y$, either $u = v$ or there exists $w$ in $C$ verifying $f_G[w](u,v)$. Finally, a graph $G$ of vertex set $V$ is $f$-distinguished by $C$ when $C$ $f$-distinguishes $V$ and $V$.

We are now ready to define the \DIS{} meta-problem.
\begin{definition}[$(f,r)$-distance identifying set]
 For a distance identifying function $f$ and $r \in \inftyset$, a \textit{$(f,r)$-distance identifying set} of a graph $G$ is an $r$-dominating set of $G$ that $f$-distinguishes $G$.
\end{definition}

\begin{qproblem}{\DIS{}}
 {A distance identifying function $f$ and $r \in \inftyset$. A graph $G$.}
 {A $(f,r)$-distance identifying set of $G$ of minimal size, if one exists.\medskip}
\end{qproblem}

Given a distance identifying function $f$ and $r \in \inftyset$ as inputs of the meta-problem, the resulting problem is called $(f,r)$-\DIS{} and denoted $(f,r)$-\aDIS{}. The problem $(f,r)$-\aDIS{} is said to be \textit{$i$-\layered{}} when the function $f$ is $i$-\layered{}, and it is said to be \textit{$i$-local} when $f$ is $i$-local and $r = i$. A problem is \textit{local} if it is $i$-local for an integer $i$.
Our local reductions will need a \textit{local gadget} to operate: the subfamilies of local problems admitting a (bipartite) local gadget is called \textit{(bipartite) gifted local}. We do not need to define \textit{gifted $1$-\layered{}} as every $1$-\layered{} problem admits a $1$-\layered{} gadget.
We also consider the parameterized version \DIS{($k$)}.

\subsection{Detailed Computational Lower Bounds}

Using the \DIS{} meta-problem, we get the following lower bounds:

\begin{theorem}\label{thm:CPlowerbound}
 For each $1$-\layered{} distance identifying function $f$ and every $r \in \inftyset{}$, the $(f,r)$-\DIS{} problem restricted to bipartite apex graphs is \NP-hard, and does not admit an algorithm running in $2^{\mathcal{O}(\sqrt{n})}$ time under ETH.
\end{theorem}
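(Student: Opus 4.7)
The plan is to give a polynomial-time reduction from \textsc{Planar} \HS{} to $(f,r)$-\DIS{}, producing a bipartite apex graph. By Theorem~\ref{thm:planarHS}, \textsc{Planar} \HS{} is NP-hard and admits no $2^{o(\sqrt{n})}$ algorithm under ETH even when $m = \mathcal{O}(n)$, so any reduction whose output graph has $\mathcal{O}(n+m) = \mathcal{O}(n)$ vertices and whose optimum $(f,r)$-distance identifying set encodes a minimum hitting set (up to an additive constant per gadget) transfers both lower bounds to the target problem.

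The construction begins with the planar associated graph $\phi(\Omega,\mathcal{S})$ and attaches, at each element vertex $v^{\Omega}_i$ and each set vertex $v^{\mathcal{S}}_j$, the generic element- and set-gadgets promised by the set/element-gadget framework of Section~\ref{sec:cis}. The element gadget is designed so that any $(f,r)$-distance identifying set is forced to contain a fixed collection of internal vertices, plus optionally $v^{\Omega}_i$ itself (which encodes the Boolean ``$u_i$ belongs to the hitting set''). The set gadget contains a designated pair of ``trap'' vertices which cannot be $f$-distinguished unless at least one neighbor $v^{\Omega}_i$ with $u_i \in S_j$ is selected. Finally, I add one auxiliary apex vertex adjacent to a carefully chosen subset of vertices (placed on the appropriate side to keep the resulting graph bipartite); since it is the only non-planar feature, the construction is bipartite apex. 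The apex is crucial: because $f$ only satisfies the weak $1$-\layered{} property $(\gamma)$, distinguishing power across distinct gadgets is not automatic, and the apex collapses all such long-range distances into the layered regime so that cross-gadget pairs are separated ``for free.''

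Correctness follows the standard two-directional argument. Given a hitting set $C_{HS}$ of size $k_{HS}$, the union of the forced vertices of every gadget with $\{v^{\Omega}_i : u_i \in C_{HS}\}$ forms a $(f,r)$-distance identifying set of size $k_{HS} + c(n+m)$ for a gadget-dependent constant $c$: the forced vertices handle intra-gadget pairs by design, the chosen $v^{\Omega}_i$ handle the trap pairs inside the set gadgets they hit, and the apex handles every remaining cross-gadget pair via properties $(\alpha)$ and $(\gamma)$. Conversely, any $(f,r)$-distance identifying set must contain the forced vertices of each gadget (by the gadget's rigidity, which relies only on the mandatory property $(\alpha)$ that equidistant pairs cannot be separated), and the vertices it projects onto $V_{\Omega}$ must form a hitting set, else a set gadget contains an unresolved trap pair.

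The main obstacle is making the gadgets generic: a single construction must work for every $1$-\layered{} distance identifying function $f$ and every $r \in \inftyset{}$ simultaneously, exploiting only $(\alpha)$ and $(\gamma)$. Property $(\gamma)$ guarantees distinguishing only when the witness is within its $1$-neighborhood of one of the two vertices, so the rigidity of each gadget has to be forced through carefully placed equidistant ``twin'' configurations that $(\alpha)$ rules out separating from outside the gadget. The parameter $r$ must also be neutralized: the $r$-dominating requirement is enforced by the forced vertices of the gadgets, and the apex ensures every vertex lies within distance $2$ of some code vertex, so the case analysis collapses uniformly over $r \in \inftyset{}$. Once this rigidity-plus-domination structure is in place, the output has $\mathcal{O}(n+m) = \mathcal{O}(n)$ vertices, the NP-hardness is immediate, and the $2^{o(\sqrt{n})}$ lower bound transfers directly from \textsc{Planar} \HS{}.
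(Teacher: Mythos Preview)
Your overall plan matches the paper's: reduce from \textsc{Planar} \HS{} using per-vertex gadget copies plus a single apex, obtain a bipartite apex graph of size linear in $n+m$, and transfer both the \NP-hardness and the $2^{o(\sqrt{n})}$ lower bound via Theorem~\ref{thm:planarHS}. The paper carries this out through the explicit $(H,B)$-apex distance identifying graph $\Phi^*[H,B](\Omega,\mathcal{S})$ together with the concrete ten-vertex $1$-\layered{} gadget (Proposition~\ref{prop:1cgadget}), and the equivalence is Theorem~\ref{thm:setsize}.

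There is, however, a real gap in how you understand the apex. You write that the apex ``handles every remaining cross-gadget pair'' and ``ensures every vertex lies within distance $2$ of some code vertex.'' In the paper the apex is \emph{not} in the identifying set $I$ at all. Cross-gadget pairs and $r$-domination are handled entirely by the forced vertices $C^\Omega_i$, $C^\mathcal{S}_j$ inside the gadget copies, via the gadget properties $(p_h)$, $(p_b)$, $(p_d)$. A single apex could not separate cross-gadget pairs anyway: it is at distance $1$ from every $v^\mathcal{S}_j$ and every $\bar v^\mathcal{S}_j$, so by $(\alpha)$ it distinguishes none of them.

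The apex's genuine role is in the \emph{backward} direction, and it is the opposite of what you describe: it is there to make non-hitting elements \emph{unable} to separate the trap pair. Concretely, the apex supplies short detours that force $d(v^\Omega_i,v^\mathcal{S}_j)=3=d(v^\Omega_i,\bar v^\mathcal{S}_j)$ whenever $u_i\notin S_j$, and likewise equalizes the distances from every other vertex of the graph to $v^\mathcal{S}_j$ and $\bar v^\mathcal{S}_j$. Without this equalization you cannot argue that only the $v^\Omega_i$ with $u_i\in S_j$ (or the trap vertices themselves) can $f$-distinguish $v^\mathcal{S}_j$ from $\bar v^\mathcal{S}_j$; since $f$ is $1$-\layered{}, a stray vertex at distance $1$ from one and not the other would already separate them, and the reduction would leak. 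So your forward direction survives (once a concrete gadget is built), but your backward direction as stated does not go through; you need the apex as an equalizer, not as a separator.
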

\begin{theorem}\label{thm:LPlowerbound}
 The (bipartite) gifted local problems restricted to (bipartite) planar graphs are \NP-hard, and do not admit an algorithm running in $2^{\mathcal{O}(\sqrt{n})}$ time under ETH.
\end{theorem}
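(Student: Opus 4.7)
The plan is to reduce \textsc{Planar} \HS{} to $(f,r)$-\DIS{} for each bipartite gifted local distance identifying function $f$ (with $r$ equal to the locality parameter $i$), and similarly for the non-bipartite case using a planar local gadget. By Theorem~\ref{thm:planarHS}, \textsc{Planar} \HS{} with $m_{HS}=\mathcal{O}(n_{HS})$ admits no $2^{o(\sqrt{n_{HS}})}$ algorithm under ETH, so it suffices to provide a polynomial-time reduction producing a (bipartite) planar graph on $\mathcal{O}(n_{HS}+m_{HS})=\mathcal{O}(n_{HS})$ vertices: an $2^{o(\sqrt{n})}$ algorithm on the output would then yield an $2^{o(\sqrt{n_{HS}})}$ algorithm for \textsc{Planar} \HS{}, and NP-hardness transfers directly from the NP-hardness of \textsc{Planar} \HS{}.

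The construction starts from the planar associated bipartite graph $\phi(\Omega,\mathcal{S})$ and replaces each element vertex $v^{\Omega}_j$ by an \emph{element gadget} and each set vertex $v^{\mathcal{S}}_i$ by a \emph{set gadget}, connected through subdivided edges whose length is calibrated to the locality parameter $i$ whenever $u_j\in S_i$. Both gadgets embed copies of the (bipartite) planar local gadget guaranteed by the ``(bipartite) gifted local'' hypothesis, so that every $(f,i)$-distance identifying set of the reduced graph contains a fixed number $g$ of forced gadget-internal vertices. The gadgets are designed so that each element gadget contains a distinguished \emph{choice vertex}, and so that a set gadget can be $i$-dominated and $f$-distinguished from outside only by a choice vertex of an incident element gadget.

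The equivalence to establish is that $(\Omega,\mathcal{S})$ admits a hitting set of size $k_{HS}$ if and only if the reduced graph $G$ admits an $(f,i)$-distance identifying set of size $k_{HS}+g$. The forward direction selects the forced skeleton plus one choice vertex per element of the hitting set; $i$-locality (properties $(\alpha)$, $(\beta_1)$, $(\beta_2)$) guarantees that pairs internal to a single gadget are resolved by the local gadget witnesses, while pairs involving a set gadget are resolved by the chosen incident element vertex lying within distance $i$. The backward direction extracts from any $(f,i)$-distance identifying set the $g$ forced vertices and argues, using the rigidity of the local gadget together with locality, that each set gadget must be $i$-dominated by the choice vertex of at least one element gadget corresponding to an element of the underlying set; reading off these choices yields a hitting set of size at most $k_{HS}$.

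The hard part will be reconciling planarity with the strict distance calibration forced by $i$-locality. Because $f$ is $i$-local, the separation and domination obligations across distinct gadgets propagate only along paths of length at most $i$, so the subdividing paths must be short enough to reach the set gadget yet long enough to block spurious short paths between element and set gadgets of non-incident pairs, which would both destroy the hitting-set correspondence and potentially destroy planarity. The local gadget is used as a black box, so the delicate work is the geometric realisation: showing that a planar embedding of $\phi(\Omega,\mathcal{S})$ lifts to a planar embedding of $G$ with the required pairwise distance profile, while keeping the total vertex count linear in $n_{HS}+m_{HS}$. Once this layout is carried out, the two claimed bounds follow immediately, NP-hardness from \textsc{Planar} \HS{} and the $2^{o(\sqrt{n})}$ bound from Theorem~\ref{thm:planarHS}.
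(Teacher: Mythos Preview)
Your plan is correct and follows exactly the paper's approach: the construction you sketch is precisely the $(H,B,r)$-distance identifying graph $\Phi[H,B,r](\Omega,\mathcal{S})$ of Section~\ref{sec:sup}, and the equivalence you describe is Theorem~\ref{thm:setsize}. The ``hard part'' you anticipate---planarity, linear size, and the distance calibration---is already fully handled by Proposition~\ref{prop:graphsize} (planarity of $\Phi[H,B,r](\Omega,\mathcal{S})$ follows directly from planarity of $\phi(\Omega,\mathcal{S})$ and of the $B$-twin-extension of $H$, since one merely replaces vertices by planar gadgets and edges by paths), so your proof should simply invoke Proposition~\ref{prop:graphsize} and Theorem~\ref{thm:setsize} rather than rebuild them. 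One small inaccuracy: the set gadget is \emph{not} $i$-dominated from outside---it carries its own copy of $C$ which dominates and distinguishes it via $(p_d)$ and $(p_h)$; the only obligation that must be met from outside is distinguishing the twin pair $v^{\mathcal{S}}_j,\bar v^{\mathcal{S}}_j$, and that is what forces a choice vertex $v^{\Omega}_i$ with $u_i\in S_j$.
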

\begin{theorem}\label{thm:LCPlowerbound}
 For each $r$-local $0$-\layered{} distance identifying function $f$, $(f,r)$-\aDIS{} restricted to bipartite planar graphs is \NP-hard, and cannot be solved in $2^{\mathcal{O}(\sqrt{n})}$ under ETH.
\end{theorem}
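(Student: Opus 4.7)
The plan is to reduce Theorem~\ref{thm:LCPlowerbound} to Theorem~\ref{thm:LPlowerbound} by exhibiting, for every $r$-local $0$-\layered{} distance identifying function $f$, an explicit bipartite planar local gadget. Once such a gadget is produced, the problem $(f,r)$-\aDIS{} enters the bipartite gifted local family and both the \NP-hardness and the $2^{o(\sqrt{n})}$ ETH lower bound follow immediately from the generic reduction of Theorem~\ref{thm:LPlowerbound} applied to \textsc{Planar} \HS{} (Theorem~\ref{thm:planarHS}).

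The crucial observation is that axiom $(\gamma)$ taken at level $i=0$ says that every vertex $w$ $f$-distinguishes itself from any other vertex: for $u \neq w$, one has $\min\{d(u,w),d(w,w)\} = 0$ and $d(u,w) \neq 0 = d(w,w)$, so $f_G[w](u,w) = \true$. This is exactly the escape clause that makes the \aLD{}-style and \aMD{}-style gadgets work, and it can be used to mimic those gadgets without assuming anything else about $f$ beyond $(\alpha), (\beta_1), (\beta_2), (\gamma)$.

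Concretely, I would design the local gadget as a small bipartite tree rooted at a single port vertex, enriched with a handful of forced leaves placed at carefully chosen depths in the ball of radius $r$ around the port. The leaves are arranged to form pairs that are indistinguishable from the outside thanks to the $r$-locality axiom $(\beta_2)$ — no vertex outside the gadget lies within distance $r$ of both — and also indistinguishable from each other by any single internal candidate unless one specific vertex of the gadget is selected. The $0$-\layered{} property then takes care of the residual pairs at distance $0$ of the distinguishing vertex, while $(\beta_1)$ ensures that the chosen internal vertex does not inadvertently disrupt identification outside the gadget. Bipartiteness is maintained by choosing a caterpillar-like tree and planarity is automatic since the gadget attaches at a single port. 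The gadget then forces a constant number of additional vertices in every $(f,r)$-distance identifying set, independent of the surrounding structure.

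The main obstacle will be verifying the gadget specification uniformly over all $r$-local $0$-\layered{} functions $f$: one cannot appeal to an explicit criterion such as $N_r[u] \,\Delta\, N_r[v]$ but only to the four axioms, so the soundness and completeness of the gadget must be argued purely from $(\alpha), (\beta_1), (\beta_2), (\gamma)$. Once this uniform verification is done, plugging the gadget into the generic bipartite planar local reduction of Theorem~\ref{thm:LPlowerbound} and using Theorem~\ref{thm:planarHS} yields the claim.
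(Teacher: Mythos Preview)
Your plan is exactly the paper's: exhibit a bipartite planar local $(f,r)$-gadget that works uniformly for every $r$-local $0$-\layered{} $f$, then invoke Theorem~\ref{thm:LPlowerbound}. The paper does this via Proposition~\ref{prop:rl0cgadget} using a concrete ladder of $2r+1$ twin pairs $(a_i,b_i)$ with $B_r=\{a_1,b_1\}$ and $C_r=\{a_1,\dots,a_{2r+1}\}$; condition $(p_s)$ follows from the \TL{} applied to each pair, and $(p_h),(p_b)$ from the fact that each $b_i$ and each $B_r$-adjacent vertex has a distinct intersection of its $r$-ball with $C_r$.

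Your tree-with-twin-leaves sketch is in the same spirit but not yet a proof, and one piece of the reasoning is off. The indistinguishability of a twin pair $x,y$ is forced by $(\alpha)$, not by $(\beta_2)$: since $N(x)=N(y)$, every third vertex is equidistant from $x$ and $y$, so your parenthetical ``no vertex outside the gadget lies within distance $r$ of both'' is never the operative case (either both lie in the $r$-ball or neither does). This is precisely the \TL{}. With that corrected, a single-port tree can plausibly be made to work, but you still owe a concrete construction and a verification of $(p_h)$ for the non-$C$ vertices against \emph{all} of $V_G$ using only $(\alpha),(\beta_1),(\beta_2),(\gamma)$; the paper's ladder makes this step clean because the depth-$r$ windows $\{a_j:|j-i|\le r\}$ are pairwise distinct.
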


\begin{theorem}\label{thm:Glowerbound}
 Let $f,g$ and $h$ be distance identifying functions such that $f$ is $1$-\layered{}, $g$ is $q$-local $0$-\layered{} and $h$ is $p$-local and admits a local (bipartite) gadget. Let $r \in \inftyset{}$.
 The $(f,r)$-, $(g,q)$- and $(h,p)$-\aDIS{} problems are \NP-hard, and do not admit:\begin{itemize}
  \item algorithms running in $2^{o(n)}$ time, except if ETH fails,
  \item parameterized algorithms running in $2^{\mathcal{O}(k)} \cdot n^{\mathcal{O}(1)}$ time, except if $W[2] = \FPT{}$.
 \end{itemize}
 \noindent The parameter $k$ denotes here the solution size of a relevant distance identifying set.
 
 \noindent All bounds still hold in the bipartite case (whenever the gadget associated with $h$ is bipartite).
\end{theorem}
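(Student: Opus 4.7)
The plan is to prove all three cases of the theorem through a common framework: construct, for each of the three variants of \DIS{}, a polynomial-time reduction from \HS{} such that the \NP-hardness and the $2^{o(n)}$ lower bound under \ETH{} transfer directly from Theorem~\ref{thm:hs}, while the $\W{2}$-hardness follows via Theorem~\ref{thm:w2bound} provided the reduced identifying-set parameter $k$ can be kept at $\mathcal{O}(k_{HS}+\log(n_{HS}+m_{HS}))$.

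First I would instantiate the generic set/element-gadget scheme advertised in the introduction. Starting from the associated bipartite graph $\phi(\Omega,\mathcal{S})$, I would replace every element vertex $v_i^{\Omega}$ and every set vertex $v_j^{\mathcal{S}}$ by a suitable sub-gadget drawn from the appropriate family: the generic $1$-\layered{} gadget for $f$, the bipartite local gadget produced by the construction for $q$-local $0$-\layered{} functions in the case of $g$, and the hypothesized local (bipartite) gadget for $h$. These gadgets are attached to the skeleton of $\phi(\Omega,\mathcal{S})$ through short connecting structures (and, in the $1$-\layered{} case, via an auxiliary apex, exactly as in Theorem~\ref{thm:CPlowerbound}), so that a vertex from an ``active'' element-gadget witnesses the selection of the corresponding element of $\Omega$, and the $r$-dominating and $f$-, $g$- or $h$-distinguishing constraints propagated through the set-gadgets force every feasible identifying set to project onto a hitting set of $\mathcal{S}$. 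Conversely, every hitting set extends into a valid identifying set by adding a bounded number of gadget vertices. This is the ``distance identifying graphs'' support promised for Section~\ref{sec:sup}.

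The crux lies in controlling the parameter. A naive instantiation uses $\Theta(n_{HS}+m_{HS})$ private gadget copies, each of which contributes at least one vertex to any identifying set; this is enough for the \ETH{} bound but not for Theorem~\ref{thm:w2bound}, since it inflates $k$ to $\Theta(n_{HS}+m_{HS})$. To reach $k=\mathcal{O}(k_{HS}+\log(n_{HS}+m_{HS}))$, I would share the gadgets across $\Theta(\log(n_{HS}+m_{HS}))$ ``bit positions'': each element and each set is encoded by the binary expansion of its index, and a single gadget copy per bit position is connected, through the edges of the distance identifying graph, precisely to those element/set vertices whose indices carry a one in that bit. Only $\mathcal{O}(\log(n_{HS}+m_{HS}))$ gadget-induced vertices then need to appear in any valid identifying set, on top of the $k_{HS}$ vertices forced by the encoding of a hitting set. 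This merging step — the ``compressed graph'' of Section~\ref{sec:sup} — is the main obstacle, since one has to verify that the binary-encoded connections still guarantee $f$-, $g$- or $h$-distinguishing for every pair of vertices of the new graph and that $r$-domination does not collapse the encoding; the subtleties differ case by case, which is why the local variants require gadgets tailored to the target function while the $1$-\layered{} case only needs the generic gadget.

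Once the reductions are established, the conclusions are immediate. The reduced instances have size polynomial in $n_{HS}+m_{HS}$, so a $2^{o(n)}$ algorithm for the target problem would yield a $2^{o(n_{HS}+m_{HS})}=2^{o(n_{HS})}$ algorithm for \HS{} (using $m_{HS}=\mathcal{O}(n_{HS})$ from Theorem~\ref{thm:hs}), contradicting \ETH. The parameter bound $k=\mathcal{O}(k_{HS}+\log(n_{HS}+m_{HS}))$ plugs directly into Theorem~\ref{thm:w2bound} and rules out parameterized algorithms in $2^{\mathcal{O}(k)}\,n^{\mathcal{O}(1)}$ time unless $\W{2}=\FPT{}$. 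The bipartite statement in each case follows by using the bipartite variant of the respective gadget, which is available by construction for $f$ and $g$ and by hypothesis for $h$.
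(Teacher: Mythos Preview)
Your proposal is correct and follows essentially the same route as the paper: the paper instantiates the compressed graph $\Psi[H,B,r](\Omega,\mathcal{S})$ with the $1$-\layered{} gadget (Proposition~\ref{prop:1cgadget}) for $f$, the $r$-local $0$-\layered{} gadget (Proposition~\ref{prop:rl0cgadget}) for $g$, and the hypothesized local gadget for $h$, and then invokes Theorem~\ref{thm:setsizeG} together with Theorems~\ref{thm:hs} and~\ref{thm:w2bound}. One minor deviation worth noting is that the paper uses the compressed construction for the \ETH{} bound as well (not only for the $\W{2}$ bound), and that the compressed graph carries its own built-in apex path $a^0,\ldots,a^{r-1}$ in \emph{all three} cases, not just the $1$-\layered{} one---this path is what equalizes the distances from non-selecting element vertices to $v_j^{\mathcal{S}}$ and $\bar v_j^{\mathcal{S}}$ once gadgets are shared.
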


As a side result, the $1$-\layered{} general reduction answers a question of Hartung in \cite{HartungN13}:\begin{corollary}
 Under ETH, \MD{} cannot be solved in $2^{o(n)}$.
\end{corollary}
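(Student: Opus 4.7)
The plan is to recognize \MD{} as the specialization $(f^{MD},\infty)$-\aDIS{} of the meta-problem for a well-chosen $1$-\layered{} distance identifying function, and then to invoke the first clause of Theorem~\ref{thm:Glowerbound}.

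First, I would read off from the rewriting of $r$-\aMD{} in Section~\ref{sec:def} the natural candidate function: for a graph $G$, distinct vertices $u,v$ and any vertex $w$, set $f^{MD}_G[w](u,v) = \true$ if and only if $w \in N_\infty[u] \cup N_\infty[v]$ and $d(u,w) \neq d(v,w)$. Since $N_\infty[u]$ is simply the connected component of $u$, the first conjunct only rules out pairs $(u,v)$ both lying outside the component of $w$, a case where the distance condition would be ill-posed anyway. By construction, an $(f^{MD},\infty)$-distance identifying set is an $\infty$-dominating set --- i.e.\ a set hitting every connected component --- that $f^{MD}$-distinguishes every pair of vertices, which is exactly the classical notion of a resolving set. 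Hence $(f^{MD},\infty)$-\aDIS{} coincides with \MD{}.

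Second, I would verify that $f^{MD}$ satisfies Definitions~\ref{def:dif} and~\ref{def:cif} for $i=1$. Property $(\alpha)$ is immediate: if $d(u,w)=d(v,w)$, the distance conjunct fails and $f^{MD}_G[w](u,v)=\false$. For property $(\gamma)$ with $i=1$: if $\min\{d(u,w),d(v,w)\}\leq 1$ and $d(u,w)\neq d(v,w)$, then without loss of generality $d(u,w) \leq 1$, so $w \in N_\infty[u] \subseteq N_\infty[u] \cup N_\infty[v]$, and together with the distance inequality this forces $f^{MD}_G[w](u,v)=\true$. Thus $f^{MD}$ is a $1$-\layered{} distance identifying function.

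Finally, I would feed $f = f^{MD}$ and $r = \infty$ into Theorem~\ref{thm:Glowerbound}: its first clause excludes any $2^{o(n)}$ algorithm under ETH for $(f,r)$-\aDIS{}, hence for \MD{}, answering Hartung's question. No real obstacle remains; the only subtlety worth flagging is the tautological correspondence, when $r=\infty$, between the $\infty$-dominating condition built into the meta-problem and the ``at least one vertex per connected component'' clause of the classical resolving-set definition.
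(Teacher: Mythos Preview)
Your proposal is correct and matches the paper's intended argument: the corollary is stated immediately after Theorem~\ref{thm:Glowerbound} as a direct consequence of the $1$-\layered{} general reduction, and your proof simply makes explicit the two routine verifications (that \MD{} is the instance $(f^{MD},\infty)$-\aDIS{} and that $f^{MD}$ satisfies $(\alpha)$ and $(\gamma)$ for $i=1$) which the paper leaves implicit.
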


Finally, notice that the parameterized lower bound from Theorem~\ref{thm:Glowerbound} may be complemented by an elementary upper bound inspired from the kernel of \aIC{} and \aLD{} of size $2^k+k$:
\begin{proposition} \label{prop:upperbound}
 For every $r$-local distance identifying function $f$, the $(f,r)$-\textsc{Distance Identifying Set} problem has a kernel of size $(r+1)^{k}+k$ where k is the solution size. Therefore, it admits a naive parameterized algorithm running in $\mathcal{O}(n^{k+3}) \in \mathcal{O^*}(r^{(k^2)})$ time.
\end{proposition}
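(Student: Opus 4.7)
The plan is to bound the number of vertices of a YES-instance via a signature argument, and then apply brute-force enumeration on the resulting kernel. Assume $G$ admits a $(f,r)$-distance identifying set $C=\{w_1,\ldots,w_k\}$, and associate to each vertex $v$ the signature $\sigma(v)=(\sigma_1(v),\ldots,\sigma_k(v))$ with $\sigma_i(v)=d(v,w_i)$ whenever $d(v,w_i)\leq r$ and $\sigma_i(v)=\infty$ otherwise. The core claim I would establish is that two distinct vertices of $V\setminus C$ must have distinct signatures. Indeed, suppose $u\neq v$ both lie in $V\setminus C$ and satisfy $\sigma(u)=\sigma(v)$; then for every $w_i\in C$ one of two cases occurs. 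Either $\sigma_i(u)=\sigma_i(v)=\infty$, so $r<\min\{d(u,w_i),d(v,w_i)\}$ and property $(\beta_2)$ of Definition~\ref{def:lif} forces $f_G[w_i](u,v)=\false$; or $\sigma_i(u)=\sigma_i(v)\in\{1,\ldots,r\}$, so $d(u,w_i)=d(v,w_i)$ and property $(\alpha)$ of Definition~\ref{def:dif} again forces $f_G[w_i](u,v)=\false$. Hence no vertex of $C$ could $f$-distinguish $u$ and $v$, contradicting that $C$ is a distance identifying set.

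For a vertex of $V\setminus C$, every signature coordinate lies in the $(r+1)$-element set $\{1,\ldots,r,\infty\}$ (the value $0$ would force the vertex to coincide with some $w_i\in C$), so there are at most $(r+1)^k$ admissible signatures. This yields $|V\setminus C|\leq (r+1)^k$ and thus $n\leq (r+1)^k+k$. The kernelization then follows immediately: on input $(G,k)$, output a trivial NO-instance whenever $|V(G)|>(r+1)^k+k$, and otherwise return $(G,k)$ unchanged.

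For the running time, I would first precompute the distance matrix of $G$ in $\mathcal{O}(n^3)$ time. The naive algorithm then enumerates the $\binom{n}{k}=\mathcal{O}(n^k)$ candidate subsets $C\subseteq V(G)$ of size $k$ and, for each, checks in $\mathcal{O}(n^3)$ time whether $C$ is an $r$-dominating set that $f$-distinguishes every pair of vertices (testing $r$-domination is $\mathcal{O}(kn)$, and the $\mathcal{O}(n^2)$ pairs are each checked against the $k$ candidate separators). This yields an overall $\mathcal{O}(n^{k+3})$ algorithm, which on the kernel substitutes to $\mathcal{O}^*((r+1)^{k(k+3)})=\mathcal{O}^*(r^{k^2})$, matching the claim. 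No substantial obstacle is expected; the only delicate point is the clean case analysis invoking $(\alpha)$ and $(\beta_2)$ in the signature argument above, together with the observation that a vertex of $C$ cannot share a signature with a vertex of $V\setminus C$ because the former always contains the coordinate $0$.
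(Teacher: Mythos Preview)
Your proof is correct and follows essentially the same approach as the paper's own proof, which is a terse sketch stating that $k$ vertices can characterize at most $(r+1)^k$ $r$-neighbors by distances and that brute force over $\binom{n}{k}$ subsets with an $\mathcal{O}(n^3)$ check per subset suffices. Your version simply fills in the details of the signature argument (invoking $(\alpha)$ and $(\beta_2)$ explicitly) and of the running-time accounting, both of which the paper leaves implicit.
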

\begin{proof} The kernel size simply relies on the fact that $k$ vertices may characterize at most $(r+1)^k$ $r$-neighbors using distances, while the parameterized algorithm just enumerates the $\binom{n}{k}$ set of $k$ vertices of the input graph, trying them in $\mathcal{O}(n^3)$.
\end{proof}

The proofs of the Theorems~\ref{thm:CPlowerbound} to~\ref{thm:Glowerbound} will be given in Section~\ref{sec:cis}.

\section{The Supports of the Reductions for \DIS{}}\label{sec:sup}
\subsection{The Distance Identifying Graphs}

Consider the associated graph $\phi(\Omega,\mathcal{S})$ as defined in Section~\ref{sub:hs}. The differences between the \DIS{} meta-problem and the dominating problem related to associated graphs actually raise two issues for a reduction based on these latter notions to be effective on \DIS{}. First, contrarily to the dominating problem where a vertex may only discern its close neighborhood, the meta-problem may allow a vertex to discern further than its direct neighborhood. In that case, we cannot certify that a vertex $v^{\Omega}_i$ does not distinguish a vertex $v^{\mathcal{S}}_j$ when $u_i$ is not in $S_j$, the adjacency not remaining a sufficient argument. Secondly, one may object that a vertex $v^{\Omega}_i$ formally has to distinguish a vertex $v^{\mathcal{S}}_j$ from another vertex, but that distinguishing a single vertex is not defined.

To circumvent these problems, we suggest the following fix: rather than producing a single vertex for each $S_j \in \mathcal{S}$, the set $V_\mathcal{S}$ may contain two vertices $v^{\mathcal{S}}_j$ and $\bar{v}^{\mathcal{S}}_j$. Then, the role of $v^{\Omega}_i$ would be to distinguish them if and only if $u_i \in S_j$.
To ensure that the vertex $v^{\Omega}_i$ distinguishes $v^{\mathcal{S}}_j$ and $\bar{v}^{\mathcal{S}}_j$ when $u_i \in S_j$, we may use the properties $(\beta_1)$ and $(\gamma)$ of Definition~\ref{def:lif} and~\ref{def:cif} for the $r$-local and $1$-\layered{} problems, respectively. Precisely, when $u_i \in S_j$, $v^{\Omega}_i$ should be at distance $r$ to $v^{\mathcal{S}}_j$ (with $r = 1$ in the $1$-\layered{} cases) while $\bar{v}^{\mathcal{S}}_j$ should not be in the $r$-neighborhood of $v^{\Omega}_i$.
Similarly, to ensure that $v^{\Omega}_i$ cannot distinguish $v^{\mathcal{S}}_j$ and $\bar{v}^{\mathcal{S}}_j$ when $u_i \not \in S_j$, we may use properties $(\alpha)$ or $(\beta_2)$ of Definitions~\ref{def:dif} and~\ref{def:lif}. Hence, when $u_i \not \in S_j$, $v^{\mathcal{S}}_j$ should not be in the $r$-neighborhood of $v^{\Omega}_i$, or $d(v^{\Omega}_i,v^{\mathcal{S}}_j)$ and $d(v^{\Omega}_i,\bar{v}^{\mathcal{S}}_j)$ should be equal.

That fix fairly indicates how to initiate the transformation of the associated graphs in order to deliver an equivalence between a hitting set formed by elements of $\Omega$ and the vertices of a distance identifying set included in $V_\Omega$. However, it is clearly not sufficient since we also have to distinguish the couples of vertices of $V_\Omega$ for which nothing is required. To solve that problem, we suggest to append to each vertex of the associated graph a copy of some gadget with the intuitive requirement that the gadget is able to distinguish the close neighborhood of its vertices from the whole graph. We introduce the notion of \textit{$B$-extension}:

\begin{definition}[$B$-extension]
 Let $H = (V_H,E_H)$ be a connected graph, and $B \subseteq V_H$. An induced supergraph $G = (V_G,E_G)$ is said to be a \textit{$B$-extension} of $H$ if it is connected and for every vertex $v$ of $V_{G \setminus H}$, the set $N(v) \cap V_H$ is either equal to $\emptyset$ or $B$.
 
 A vertex $v$ of $V_{G \setminus H}$ such that $N(v) \cap V_H = B$ is said to be \textit{$B$-adjacent}. The $B$-extensions of $H$ such that $V_{G \setminus H}$ contains exactly a $B$-adjacent vertex or two $B$-adjacent vertices but not connected to each other are called \textit{the $B$-single-extension} and \textit{the $B$-twin-extension} of $H$, respectively.
\end{definition}

Here, the "border" $B$ makes explicit the connections between a copy of a gadget $H$ and a vertex outside the copy. In particular, a $B$-single-extension is formed by a gadget with its related vertex $v^{\Omega}_i$, while a $B$-twin-extension contains a gadget with its two related vertices $v^{\mathcal{S}}_j$ and $\bar{v}^{\mathcal{S}}_j$.
Piecing all together, we may adapt the associated graphs to the meta-problem:

\begin{definition}[$(H,B,r)$-distance identifying graph]
 Let $(\Omega = \{u_i \; | \; i \in \intset{n}\},\mathcal{S} = \{S_i \; | \; i \in \intset{m}\})$ be an instance of \HS{}. Let $H$ be a connected graph, $B$ a subset of its vertices, and $r$ a positive integer. The \textit{$(H,B,r)$-distance identifying graph $\Phi[H,B,r](\Omega,\mathcal{S})$} is as follows.
 \begin{itemize}
  \item for each $i \in \intset{n}$, the graph $\Phi[H,B,r](\Omega,\mathcal{S})$ contains as induced subgraph a copy $H^{\Omega}_i$ of $H$ together with a $B^{\Omega}_i$-adjacent vertex $v^{\Omega}_i$, where $B^{\Omega}_i$ denotes the copy of $B$.
  \item similarly, for each $j \in \intset{m}$, the graph $\Phi[H,B,r](\Omega,\mathcal{S})$ contains a copy $H^{\mathcal{S}}_j$ of $H$ together with two $B^{\mathcal{S}}_j$-adjacent vertices $v^{\mathcal{S}}_j$ and $\bar{v}^{\mathcal{S}}_j$ (the latter vertices are not adjacent) and where $B^{\mathcal{S}}_j$ denotes the copy of $B$.
  \item finally, for each $S_j \in \mathcal{S}$ and each $u_i \in S_j$, $v^{\Omega}_i$ is connected to $v^{\mathcal{S}}_j$ by a path of $r-1$ vertices denoted $l^k_{i,j}$ with $d(v^{\Omega}_i,l^k_{i,j}) = k$ for each $k \in \intset{r-1}$.
 \end{itemize}
\end{definition}

When the problem is not local, we prefer the following identifying graph:
\begin{definition}[$(H,B)$-apex distance identifying graph]
 An \textit{$(H,B)$-apex distance identifying graph $\Phi^*[H,B](\Omega,\mathcal{S})$}  is the union of a $(H,B,1)$-distance identifying graph with an additional vertex $a$ called \textit{apex} such that:
 \begin{itemize}
  \item for each $u_i \in \Omega$, the apex $a$ is $B^\Omega_i$-adjacent to $H^\Omega_i$, where $B^{\Omega}_i$ (resp.  $H^\Omega_i$) denotes the copy of $B$ (resp. $H$).
  \item for each $S_j \in \mathcal{S}$, the apex $a$ is adjacent to $v^{\mathcal{S}}_j$ and $\bar{v}^{\mathcal{S}}_j$.
 \end{itemize}
\end{definition}
See Figure~\ref{fig:graphidentifying} for an example of an $(H,B,r)$-distance identifying graph (on the left) and an example of $(H,B)$-apex distance identifying graph (on the right).
\begin{proposition} \label{prop:graphsize}
 Given an instance $(\Omega,\mathcal{S})$ of \textsc{Planar} \HS{} where $|\Omega| = n$, $|\mathcal{S}| = m$, the graphs $G = \Phi[H,B,r](\Omega,\mathcal{S})$ and  $G' = \Phi^*[H,B](\Omega,\mathcal{S})$\begin{itemize}
  \item are connected and have size bounded by $(|H|+2r)(n+m)$, (with $r = 1$ for $G'$),
  \item may be built in polynomial time in their size,
  \item are bipartite if the $B$-single extension of $H$ is bipartite,
  \item are respectively planar and an apex graph if the $B$-twin-extension of $H$ is planar.
        
 \end{itemize}
\end{proposition}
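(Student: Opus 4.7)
The plan is to verify each of the four bullets in turn, relying on the explicit construction of $\Phi[H,B,r]$ and $\Phi^*[H,B]$ and on the hypothesis that $\phi(\Omega,\mathcal{S})$ is planar bipartite.

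\textbf{Size and construction.} Let $E = |\{(i,j) : u_i \in S_j\}|$, the edge count of $\phi(\Omega,\mathcal{S})$. The vertex set of $G = \Phi[H,B,r](\Omega,\mathcal{S})$ consists of $(n+m)|H|$ vertices from the copies of $H$, the $n + 2m$ special vertices $v^{\Omega}_i, v^{\mathcal{S}}_j, \bar{v}^{\mathcal{S}}_j$, and the $(r-1)E$ subdivision vertices along the incidence paths. Since $\phi$ is planar and bipartite, Euler's formula gives $E \le 2(n+m) - 4$, which yields the bound $(|H|+2r)(n+m)$. For $G'$, no subdivision vertex appears (as $r=1$) and the single apex is absorbed by the same bound. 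Constructing either graph takes polynomial time by iterating over elements, sets and incidences. Each copy $H^{\Omega}_i \cup \{v^{\Omega}_i\}$ and $H^{\mathcal{S}}_j \cup \{v^{\mathcal{S}}_j,\bar{v}^{\mathcal{S}}_j\}$ is connected by construction, so, assuming WLOG that $\phi$ is connected (otherwise the reduction applies component-wise), the incidence paths connect all copies of $G$; the apex of $G'$ makes it connected unconditionally.

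\textbf{Bipartiteness.} Suppose the $B$-single extension of $H$ is bipartite. Then $H$ admits a proper $2$-coloring in which $B$ is monochromatic, the extra vertex being forced into the opposite class. We apply this coloring in each copy but freely flip the two classes per copy: let $\alpha_i \in \{0,1\}$ (resp.\ $\beta_j$) denote the color of $v^{\Omega}_i$ (resp.\ of $v^{\mathcal{S}}_j$, equal to that of $\bar{v}^{\mathcal{S}}_j$ by symmetry). Each length-$r$ path between $v^{\Omega}_i$ and $v^{\mathcal{S}}_j$ enforces $\alpha_i + \beta_j \equiv r \pmod{2}$; this is a proper $2$-coloring of the bipartite graph $\phi$ when $r$ is odd, and the condition ``$\alpha,\beta$ constant on every connected component of $\phi$'' when $r$ is even. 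Either condition is satisfiable since $\phi$ is bipartite, and internal path vertices are coloured by alternation. For $G'$, one further fixes all $\alpha_i$ to a common $\alpha$ and all $\beta_j$ to $1-\alpha$ (still compatible with the direct $v^{\Omega}_i v^{\mathcal{S}}_j$ edges), and assigns color $\alpha$ to the apex, whose neighbors are the $B^{\Omega}_i$ and the $v^{\mathcal{S}}_j, \bar{v}^{\mathcal{S}}_j$, all of color $1-\alpha$.

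\textbf{Planarity.} Assume the $B$-twin extension of $H$ is planar. Fix a planar embedding: by choosing the outer face to contain either twin vertex, both the $B$-twin extension and (after deleting one twin) the $B$-single extension admit planar embeddings with the designated port ($v^{\mathcal{S}}_j$ or $v^{\Omega}_i$) on the outer face. Starting from a planar embedding of $\phi(\Omega,\mathcal{S})$, inflate each $\phi$-vertex into the corresponding copy of $H$ with its port on the outer face, and route each $\phi$-edge along the corresponding length-$r$ subdivision path; edge subdivision preserves planarity, so $G$ is planar. The graph $G' \setminus \{a\}$ coincides with $\Phi[H,B,1](\Omega,\mathcal{S})$, planar by the same argument, so $G'$ is an apex graph.

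The most delicate step is the planarity argument: it crucially uses the ability to place the selected port of each copy on its outer face, which is exactly what planarity of the $B$-twin extension (and not merely of the $B$-single extension) provides, since the $\mathcal{S}$-copies each carry two twin ports. The bipartiteness step is by comparison just a parity bookkeeping rendered feasible by the per-copy freedom to flip colors; the size bound and polynomial construction are routine once the linear edge bound for planar bipartite graphs is invoked.
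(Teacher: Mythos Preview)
Your proof is correct and follows the same approach as the paper for the size bound (both invoke Euler's formula on the planar bipartite graph $\phi(\Omega,\mathcal{S})$ to bound the number of incidence paths by $2(n+m)-4$) and for connectedness (both assume WLOG that $\phi$ is connected). The paper simply declares the remaining three items ``direct by construction,'' whereas you actually carry out the verifications: the per-copy color-flip argument for bipartiteness and the vertex-inflation/port-on-outer-face argument for planarity are exactly the details one would fill in, and they are correct.
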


\begin{proof}
 The graph $G$ is formed by the union of $n$ $B$-single-extensions of $H$, $m$ $B$-twin-extensions of $H$ and the all the possible paths of $r-1$ vertices.
 As $\phi(\Omega,\mathcal{S})$ is a bipartite planar graph, the Euler formula implies that the number of paths is bounded by $2(n+m)-4$. We conclude that the number of vertices of $G$ is bounded by: \[n(|H|+1)+m(|H|+2)+(r-1)(2(n+m)-4) = (|H|+2r)(n+m)-n-4(r-1)\]
 Furthermore, it is clear that $G$ is connected if and only if the associated graph $\phi(\Omega,\mathcal{S})$ is connected. Additionally, we may consider that $\phi(\Omega,\mathcal{S})$ is connected since it is a property decidable in polynomial time, and that the instances corresponding to the distinct connected components of $\phi(\Omega,\mathcal{S})$ may be considered independently.\medskip
 
 Finally, all the other items of the proposition are direct by construction.
\end{proof}

\begin{figure}[t]%
 \centering
 \begin{minipage}[b]{0.47\linewidth}
  \centering
  \includegraphics[width=.8\textwidth]{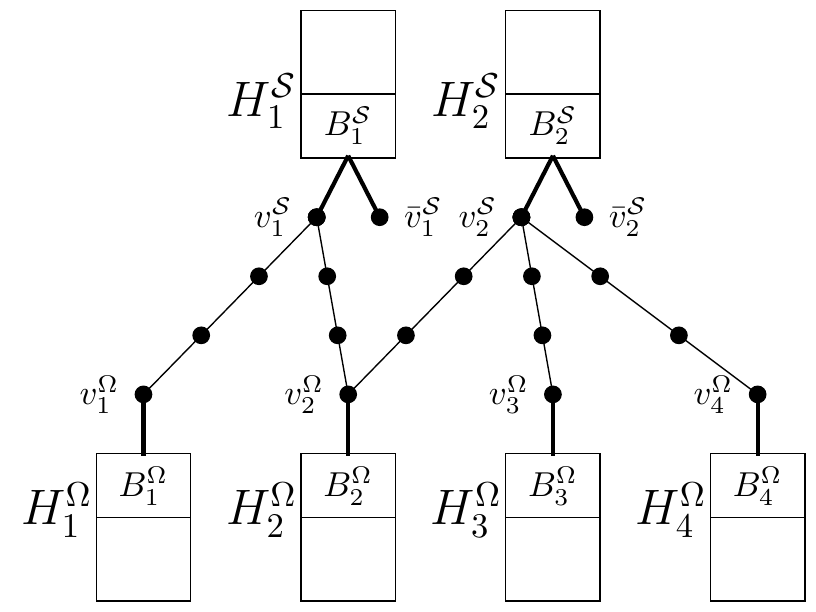}
 \end{minipage}
 \hspace{0.5cm}
 \begin{minipage}[b]{0.47\linewidth}
  \centering
  \includegraphics[width=.8\textwidth]{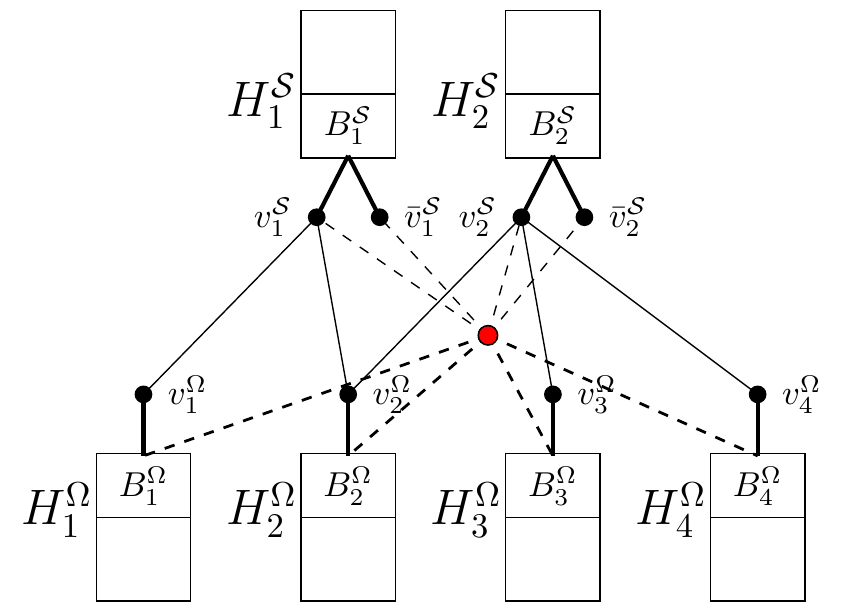}
 \end{minipage}
 
 \caption{A $(H,B,3)$-distance identifying graph and a $(H,B)$-apex distance identifying graph built on the planar instance formed by $\Omega = \{1,2,3,4\}$ and $\mathcal{S} = \{\{1,2\},\{2,3,4\}\}$.}
 \label{fig:graphidentifying}
\end{figure}

Having defined the (apex) distance identifying graphs, the main effort to obtain generic reduction from \textsc{Planar} \HS{} is done. We now define relevant gadgets:

\begin{definition}[$(f,r)$-gadgets] \label{def:gadgets}
 Let $f$ be a distance identifying function and $r \in \inftyset$.
 Let $H = (V_H,E_H)$ be a connected graph, and $B,C$ be two subsets of $V_H$.
 We said that the triple $(H,B,C)$ is a \textit{$(f,r)$-gadget} if for every $B$-extension $G$ of $H$:
 \begin{itemize}
  \item[$(p_h)$] $C$ $f$-distinguishes $V_H$ and $V_G$.
  \item[$(p_b)$] $C$ $f$-distinguishes $N_B$ and $V_{G \setminus H} \setminus N_B$, where $N_B$ is the set of $B$-adjacent vertices of $G$.
  \item[$(p_d)$] $C$ is an $r$-dominating set of $G[V_H \cup N_B]$.
  \item[$(p_s)$] For all $(f,r)$-distance identifying set $S$ of $G$, $|C| \leq |S \cap V_H|$.
 \end{itemize}
\end{definition}

\begin{definition}[local gadgets]
 \label{def:localgadgets}
 A $(f,r)$-gadget is a \textit{local gadget}, if $f$ is a $r$-local identifying function with $r \neq \infty$, and
 \begin{itemize}
  \item[$(p_l)$] for every $k \in \intset{r}$, there exists $c \in C$ such that $d(c,B) = k-1$.
 \end{itemize}
\end{definition}

Consistently, we say that a $(f,r)$-gadget $(H,B,C)$ is \textit{bipartite} if the $B$-single-extension of $H$ is bipartite, and that it is \textit{planar} if the $B$-twin-extension of $H$ is planar.

\begin{theorem}\label{thm:setsize}
 Let $(\Omega,\mathcal{S})$ be an instance of \HS{} such that $|\Omega| = n > 1$, $|\mathcal{S}| = m$. Let $(H,B,C)$ be a $(f,r)$-gadget for a $1$-\layered{} identifying function $f$ and let $(H',B',C')$ be a local $(g,q)$-gadget. The following propositions are equivalent: \begin{itemize}
  \item there exists a hitting set of $\mathcal{S}$ of size $k$.
  \item there exists a $(f,r)$-distance identifying set of $\Phi^*[H,B](\Omega,\mathcal{S})$ of size $k+|C|(n+m)$.
  \item there exists a $(g,q)$-distance identifying set of $\Phi[H',B',q](\Omega,\mathcal{S})$ of size $k+|C'|(n+m)$.
 \end{itemize}
\end{theorem}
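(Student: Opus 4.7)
The plan is to prove both equivalences $1 \Leftrightarrow 2$ (in the $1$-\layered{} apex setting, with gadget $(H,B,C)$) and $1 \Leftrightarrow 3$ (in the local setting, with gadget $(H',B',C')$) along a common template: construct a distance identifying set from a hitting set in the forward direction, and extract a hitting set from a distance identifying set in the reverse direction, using the gadget property $(p_s)$ as the counting lever.

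For the forward direction, given a hitting set $K$ of $\mathcal{S}$ of size $k$, I propose
\[
D \;=\; \bigcup_{i \in \intset{n}} C^\Omega_i \;\cup\; \bigcup_{j \in \intset{m}} C^\mathcal{S}_j \;\cup\; \{v^\Omega_i : u_i \in K\},
\]
where $C^X_y$ denotes the copy of $C$ (or $C'$) sitting inside $H^X_y$, so $|D| = k+|C|(n+m)$. To verify that $D$ is a distance identifying set, I use $(p_d)$ inside each gadget; $(p_l)$ then reaches every path vertex $l^k_{i,j}$ in the local case, and $(p_d)$ forcing $C \cap B \neq \emptyset$ dominates the apex in the $1$-\layered{} case, giving $r$-domination. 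For the distinguishing part, $(p_h)$ applied inside each gadget handles every pair containing a vertex of some $V_{H^X_y}$, while $(p_b)$ handles pairs of outside-gadget vertices differing in their $B$-adjacency to some gadget. The critical twin pair $(v^\mathcal{S}_j, \bar v^\mathcal{S}_j)$ is taken care of by the chosen $v^\Omega_i \in D$ with $u_i \in K \cap S_j$: we have $d(v^\Omega_i, v^\mathcal{S}_j) = r$ and $d(v^\Omega_i, \bar v^\mathcal{S}_j) = r+2$, which triggers $(\gamma)$ in the $1$-\layered{} case and $(\beta_1)$ in the local case. One extra check is needed for pairs $(a, v^\Omega_i)$ in the apex graph, as both are $B^\Omega_i$-twins of $H^\Omega_i$; the hypothesis $n > 1$ supplies some $i' \neq i$, and a vertex $c \in C^\Omega_{i'} \cap B^\Omega_{i'}$ separates them since $d(c, a) = 1 \neq 3 = d(c, v^\Omega_i)$.

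For the reverse direction, let $S$ be a distance identifying set of size $k + |C|(n+m)$. Applying $(p_s)$ to each of the $n+m$ gadgets yields $|S \cap V_{H^X_y}| \geq |C|$, so $|S^{\mathrm{out}}| \leq k$ where $S^{\mathrm{out}} = S \setminus \bigcup_{X,y} V_{H^X_y}$. I build a candidate hitting set $K$ by mapping each element of $S^{\mathrm{out}}$ to an index of $\Omega$: $v^\Omega_i \mapsto u_i$; $l^k_{i,j} \mapsto u_i$, which belongs to $S_j$ by the very existence of the path; $v^\mathcal{S}_j, \bar v^\mathcal{S}_j \mapsto$ an arbitrary element of $S_j$; and the apex, if present in $S$, is discarded. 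Hence $|K| \leq k$, and it remains to verify that $K$ hits every $S_j$.

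The main obstacle is this last step, which rests on a careful distance analysis in $\Phi^*[H,B]$ or $\Phi[H',B',q]$. Fix $j$; since $v^\mathcal{S}_j \neq \bar v^\mathcal{S}_j$ must be $f$-distinguished by some $c \in S$, I perform a case analysis on the location of $c$. By the $B^\mathcal{S}_j$-twin symmetry, every $w \in V_{H^\mathcal{S}_j}$ satisfies $d(w, v^\mathcal{S}_j) = d(w, \bar v^\mathcal{S}_j)$, so $(\alpha)$ rules out $c \in V_{H^\mathcal{S}_j}$. A systematic computation shows that every candidate outside the admissible set $\{v^\mathcal{S}_j, \bar v^\mathcal{S}_j\} \cup \{v^\Omega_i,\, l^k_{i,j} : u_i \in S_j\}$ either produces equal distances to both targets (through the apex in the $1$-\layered{} case, or only through $V_{H^\mathcal{S}_j}$ in the local case) or places both targets beyond $N_r[c]$ in the local case, failing via $(\alpha)$ or $(\beta_2)$ respectively. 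Thus the distinguishing $c$ must be admissible, and its image under our map lies in $S_j$, giving $K \cap S_j \neq \emptyset$ and concluding the proof. The analysis is tedious but structurally regular thanks to the twin symmetry of $v^\mathcal{S}_j, \bar v^\mathcal{S}_j$ and the very limited inter-gadget connectivity of the (apex) distance identifying graphs.
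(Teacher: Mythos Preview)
Your proposal is correct and follows essentially the same approach as the paper: the same candidate set $D$ in the forward direction, the same use of $(p_h)$, $(p_b)$, $(p_d)$, $(p_l)$ for domination and distinguishing, the same hitting-set extraction via $(p_s)$ in the reverse direction, and the same case analysis on who can $f$-distinguish $v^{\mathcal S}_j$ from $\bar v^{\mathcal S}_j$.

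Two small inaccuracies worth fixing. First, $(p_d)$ does \emph{not} force $C\cap B\neq\emptyset$; it only says $C$ $r$-dominates $V_H\cup N_B$. That conclusion alone already $r$-dominates the apex (since $a\in N_{B^\Omega_i}$), so your stated intermediate claim is unnecessary. Second, and relatedly, your ``extra check'' for the pair $(a,v^\Omega_i)$ invokes a vertex $c\in C^\Omega_{i'}\cap B^\Omega_{i'}$, whose existence is not guaranteed by the gadget axioms; but this check is in fact redundant, because your own earlier principle already covers it: $a$ is $B^\Omega_{i'}$-adjacent while $v^\Omega_i$ is not (for $i'\neq i$, using $n>1$), so $(p_b)$ applied to $H^\Omega_{i'}$ distinguishes them directly. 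This is exactly how the paper handles it. Finally, in the local forward direction you should make explicit that $(p_l)$ is also used to \emph{distinguish} path vertices $l^k_{i,j}$ and $l^{k'}_{i,j}$ on the same path (via a $c\in C^\Omega_i$ with $d(c,B^\Omega_i)=q-k'-1$), not only to dominate them; your $(p_b)$ principle does not cover pairs that share the same $B$-adjacency profile.
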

\begin{proof} We start by focusing on the equivalence between the first and second items.
 
 Suppose first that $P$ is a hitting set of $(\Omega,\Set)$ of size $k$. By denoting $C^\Omega_i$ and $C^\Set_j$ the copies of $C$ associated to the copies $H^\Omega_i$ and $H^\Set_j$ of $H$, we suggest the following set $I$ of size $k+|C|(n+m)$ as a $(f,r)$-distance identifying set of $G = \Phi^*[H,B](\Omega,\Set)$:
 \[I = \{ v^\Omega_i \colon u_i \in P \} \; \cup \underset{i \, \in \, \intset{n}}{\bigcup} C^\Omega_i \cup \underset{j \, \in \, \intset{m}}{\bigcup}  C^\Set_j.\]
 
 Recall that by construction, $G$ is a $B^\Omega_i$-extension of $H^\Omega_i$ (respectively $B^\Set_j$-extension of $H^\Set_j$) for any $i \in \intset{n}$ (respectively $j \in \intset{m}$).
 This directly implies that $I$ is an $r$-dominating set of $G$. Indeed, the condition $(p_d)$ of Definition~\ref{def:gadgets} implies that $C^\Omega_i$ (respectively $C^\Set_j$) $r$-dominates $H^\Omega_i$ plus $v^\Omega_i$ (respectively of $H^\Set_j$ plus $v^\Set_j$, $\ov^\Set_j$). The remaining apex is also $r$-dominated by any $C^\Omega_i$, as it is $B^\Omega_i$-adjacent for every $i \in \intset{n}$.
 
 We now have to show that $I$ $f$-distinguishes $G$. We begin with the vertices of the gadget copies because the condition $(p_h)$ implies that $C_i^\Omega \subseteq I$ $f$-distinguishes the vertices of $H_i^\Omega$ and $G$ for every $i \in \intset{n}$, and $I$ $f$-distinguishes the vertices of $H_j^\Set$ and $G$ for every $j \in \intset{m}$.
 Thereby, we only have to study the vertices of the form $v^\Omega_i$, $v^\Set_j$, $\ov^\Set_j$, and the apex $a$ (there is no vertex of the form $l^k_{i,j}$ in an apex distance identifying graph).
 To distinguish them, we use the condition $(p_b)$. Recall that $n > 1$. Then, for each distinct $i,i' \in \intset{n}$, we have:
 \begin{itemize}
  \item $v^\Omega_i$ is $B^\Omega_i$-adjacent but not $B^\Omega_{i'}$-adjacent,
  \item $a$ is both $B^\Omega_i$-adjacent and $B^\Omega_{i'}$-adjacent,
  \item a vertex of the form $v^\Set_j$ or $\ov^\Set_j$ is neither $B^\Omega_i$-adjacent nor $B^\Omega_{i'}$-adjacent.
 \end{itemize}
 Enumerating the relevant $i$ and $i'$, we deduce that every couple of vertices is distinguished except when they are both of the form $v^\Set_j$ or $\ov^\Set_{j'}$ for $j,j' \in \intset{m}$.
 But we may distinguish $v^\Set_j$ or $\ov^\Set_{j'}$ for distinct $j,j'$ by applying $(p_b)$ on $H^\Set_j$.
 
 It remains to distinguish $v^\Set_j$ and $\ov^\Set_{j}$ for $j \in \intset{m}$.
 We now use the fact that $P$ is a hitting set for $(\Omega,\Set)$.
 By definition of a hitting set, for any set $S_j \in \Set$, there exists a vertex $u_i \in P$ such that $u_i \in S_j$. We observe that $d(v^\Omega_i,v^\Set_j) = 1 < d(v^\Omega_i,\ov^\Set_j)$ by construction of $G$ and that $v^\Omega_i \in I$ by definition of $I$. Since $f$ is $1$-\layered{}, $I$ $f$-distinguishes $v^\Set_j$ and $\ov^\Set_j$.\medskip
 
 In the other direction, assume that $I$ is a distance identifying set of $G$ of size $k+|C|(n+m)$.
 As every set of $S$ is not empty, we may define a function $\varphi : \intset{m} \to \intset{n}$ such that $u_{\varphi(j)} \in S_j$.
 
 We suggest the following set $P$ as an hitting set of $\mathcal{S}$ of size at most $k$:
 \[P = \{ u_i \in \Omega \; | \; v^\Omega_i \in I \} \cup \{ u_{\varphi(j)} \in \Omega \; | \; v^\Set_j \in I \text{ or } \ov^\Set_j \in I\}\]
 
 We claim that the only vertices that may $f$-distinguish $v^\Set_j$ and $\ov^\Set_j$ are themselves and the vertices $v^\Omega_i$ such that $u_i \in S_j$. To prove so, we apply propriety $(\alpha)$ of Definition~\ref{def:dif}: \begin{itemize}
  \item the apex $a$ verifies $d(a,v^\Set_j) = 1 = d(a,\ov^\Set_j)$
  \item a vertex $v^\Omega_i$ such that $u_i \not \in S_j$ verifies $d(v^\Omega_i,v^\Set_j) = 3 = d(v^\Omega_i,\ov^\Set_j)$
  \item a vertex $v$ of $H^\Omega_i$ verifies $d(v,v^\Set_j) = 2+d(v,B^\Omega_i) = d(v,\ov^\Set_j)$
  \item a vertex $v$ of $H^\Set_{j'}$ with $j \neq j'$ verifies $d(v,v^\Set_j) = 3+ d(v,B^\Set_j) = d(v,\ov^\Set_j)$
  \item both $v^\Set_j$ and $\ov^\Set_j$ are $B^\Set_j$-adjacent, so they are at the same distance of any vertex of $H^\Set_j$.
 \end{itemize}
 
 We deduce that $v^\Set_j$ and $\ov^\Set_j$ are $f$-distinguished only if either one on them belongs to $I$ (in that case $u_{\varphi(j)} \in P \cap S_j$) or there exists $v^\Omega_i \in I$ such that $u_i \in S_j$ (and then $u_i \in P \cap S_j$).
 
 It remains to show that $|P| \leq k$. By the condition $(p_s)$ of Definition~\ref{def:gadgets}, we know that $|I \cap V_{H^\Omega_i}| \geq |C^\Omega_i|$ and  $|I \cap V_{H^\Set_j}| \geq |C^\Set_j|$ for any $i \in \intset{n}$ and $j \in \intset{m}$, implying\par
 \begin{center}\makebox[\linewidth]{$k = |I| - |C|(n+m)  \geq \sum\limits_{i \in [n]} |I \cap \{v^\Omega_i\}| + \sum\limits_{j \in [m]} |I \cap \{v^\Set_j ,\ov^\Set_j\}| \geq \sum\limits_{v^\Omega_i \in I} 1 + \sum\limits_{ I \cap \{v^\Set_j,\ov^\Set_j\} \neq \emptyset} 1 = |P|$}
 \end{center}\par
 
 Now, we prove the equivalence between the first and third items.
 Consider a $r$-local distance identifying function $f$, a local $(f,r)$-gadget $(H,B,C)$ and an instance $(\Omega,\Set)$ of \textsc{Planar} \HS{} such that $|\Omega| = n$, $|\Set| = m$.
 We denote the copies of $H$ as $H_i^\Omega$ or $H_j^\Set$, the copies of $C$ as $C_i^\Omega$ and $C_j^\Set$, and the copies of $B$ as $B_i^\Omega$ and $B_j^\Set$  for any $i \in \intset{n}$ and $j \in \intset{m}$.
 
 In the first direction, suppose that $P$ is a hitting set of $(\Omega,\Set)$ of size $k$, the $(g,r)$-distance identifying set $I$ of $G = \Phi[H,B,r](\Omega,\Set)$ is defined identically as in the equivalence of the first and second items of the current theorem:
 \[ I = \{ v_i^\Omega \colon u_i \in P\} \cup \bigcup_{i \in [n]} C_i^\Omega \cup \bigcup _{j \in [m]} C_j^\Set\]
 Using conditions $(p_d)$ and $(p_l)$ of Definitions~\ref{def:gadgets} and~\ref{def:localgadgets}, $I$ is clearly an $r$-dominating set of $G$. Indeed, by $(p_d)$ every vertex belonging to a copy of the gadget is $r$-dominated. Additionally, every vertex outside of the copies of the gadgets is at distance at most $r$ of a copy by construction, but there exists a vertex $b \in B \cap C$ (so a relevant copy in $I$) by $(p_l)$.
 
 To prove that $I$ $f$-distinguishes $G$, the strategy is differing from the previous equivalence only on the $l_{i,j}^k$ vertices and when distinguishing $v^\Set_j$ and $\ov^\Set_j$ as we will see.
 
 Recall that by construction, $G$ is a $B^\Omega_i$-extension of $H^\Omega_i$ (respectively $B^\Set_j$-extension of $H^\Set_j$) for any $i \in \intset{n}$ (respectively $j \in \intset{m}$).
 Distinguishing the vertices of the gadget copies is easy, as the condition $(p_h)$ implies that $C_i^\Omega \subseteq I$ $f$-distinguishes the vertices of $H_i^\Omega$ and $G$ for every $i \in \intset{n}$, and similarly $I$ $f$-distinguishes the vertices of $H_j^\Set$ and $G$ for every $j \in \intset{m}$.

 Thereby, we only have to study the vertices of the form $v^\Omega_i$, $v^\Set_j$, $\ov^\Set_j$,  and the vertices $l_{i,j}^k$.

 To distinguish them, we mainly use the condition $(p_b)$.
 We observe that for each distinct $i,i' \in \intset{n}$ (they exist as $n > 1$) : \begin{itemize}
  \item $v^\Omega_i$ is $B^\Omega_i$-adjacent but not $B^\Omega_{i'}$-adjacent,
  \item for every $j \in [m]$, $v^\Set_j$ or $\ov^\Set_j$ is neither $B^\Omega_i$-adjacent nor $B^\Omega_{i'}$-adjacent.
  \item for every $i \in [n]$, $j \in [m]$ and $k \in [r-1]$, $l_{i,j}^k$ is neither $B^\Omega_i$-adjacent nor $B^\Omega_{i'}$-adjacent.
 \end{itemize}
 Thus $I$ $f$-distinguishes $v_i^\Omega$ and $G$.
 
 As the vertices of form $l_{i,j}^k$ are the only ones to belong to both the $r$-neighbourhood of $B_i^\Omega$ and $B_j^\Set$, and as the vertices $l_{i,j}^k$ and $l_{i,j}^{k'}$ with $k < k'$ are $f$-distinguished by the guaranteed vertex $c \in C_i^\Omega$ such that $d(c,B_i^\Omega) = r-k-1$, $I$ $f$-distinguishes $l_{i,j}^k$ and $G$ for every relevant $i,j$ and $k$.
 
 It remains to distinguish $v^\Set_j$ and $\ov^\Set_{j'}$ for $j,j' \in \intset{m}$. If $j$ and $j'$ are distinct we may use $(p_b)$ on the copy $H_j^\Set$ of the gadget $H$. We may assume $j = j'$.
 We now use the fact that $P$ is a hitting set for $(\Omega,\Set)$.
 By definition of an hitting set, for any set $S_j \in \Set$, there exists a vertex $u_i \in P$ such that $u_i \in S_j$.
 We observe that $d(v^\Omega_i,v^\Set_j) = r < d(v^\Omega_i,\ov^\Set_j)$ (when $u_i \in S_j$) by construction of $G$ and $I$ indeed $f$-distinguishes $v^\Set_j$ and $\ov^\Set_j$ because $f$ is $r$-local.\medskip
 
 In the other direction, assume that $I$ is a distance identifying set of $G$ of size $k+|C|(n+m)$.
 The hitting set $P$ may now depend on $l_{i,j}^k$. Let define $L_i = \{v^\Omega_i\} \cup \{l_{i,j}^k \; | \; k \in \intset{r-1} \text{ and } u_i \in S_j\}$.
 As every set of $S$ is not empty, we may define a function $\varphi : \intset{m} \to \intset{n}$ such that $u_{\varphi(j)} \in S_j$.
 We suggest the following set $P$ as an hitting set of $\mathcal{S}$ of size at most $k$:
 \[P = \{ u_i \in \Omega \; | \; I \cap L_i \neq \emptyset \} \cup \{ u_{\varphi(j)} \in \Omega \; | \; v^\Set_j \in I \text{ or } \ov^\Set_j \in I\}\]
 
 Consider $j \in \intset{m}$, let us show that the only vertices that may $f$-distinguish the couple $(v^\Set_j, \ov^\Set_j)$ are themselves and the vertices from $L_i$ (and not only $v_i^\Omega$) such that $u_i \in S_j$.
 Every vertex from $H_j^\Set$ is at the same distance to $v^\Set_j$ and $\ov^\Set_j$ and thus cannot $f$-distinguishes them because of the distance property $(\alpha)$.
 Every vertex not in $H_j^\Set$, not in $L_i$ for every $i \in \intset{n}$ such that $u_i \in S_j$ and different from $v^\Set_j$ and $\ov^\Set_j$ is at distance at least $r+1$ of the two latter vertices.
 Thus, because of the propriety $(\beta_2)$ of Definition~\ref{def:lif} (a vertex cannot distinguish two vertices outside of its $r$-neighbourhood) any of these vertices does not $f$-distinguish $v^\Set_j$ and $\ov^\Set_j$.
 We deduce that $v^\Set_j$ and $\ov^\Set_j$ are $f$-distinguished if and only if either one of them belongs to $I$ (in that case $u_{\varphi(j)} \in P \cap S_j$) or there exists $i \in \intset{n}$ such that $u_i \in S_j$ and $I \cap L_i \neq \emptyset$.
 
 The proof that $|P| \leq k$ is provided by $(p_s)$, we know that $|I \cap V_{H^\Omega_i}| \geq |C^\Omega_i|$ and  $|I \cap V_{H^\Set_j}| \geq |C^\Set_j|$ for any $i \in \intset{n}$ and $j \in \intset{m}$.
 Considering the following partition of $I$
 \[I = \Big(\sqcup_{i \in \intset{n}} (I \cap H_i^\Omega) \Big) \bigsqcup \Big(\sqcup_{j \in \intset{m}} (I \cap H_j^\Set) \Big) \bigsqcup \Big( \sqcup_{i \in \intset{n}} (I \cap L_i) \Big) \bigsqcup \Big( \sqcup_{j \in \intset{m}} (I \cap \{ v_j^\Set , \ov_j^\Set)  \} \Big)\]
 We get
 
 \medskip
 \begin{tabular}{rlllllll}
  $|I|$ & $\ge |C|(n+m)$                                                             &                  
  +     & ${\displaystyle\sum_{i \in \intset{n}} |I \cap L_i|}$                      &                  
  +     & ${\displaystyle\sum_{j \in \intset{m}} |I \cap \{v^\Set_j ,\ov^\Set_j\}|}$                    
  \\
        & $\ge |C|(n+m)$                                                             &                  
  +     & ${\displaystyle\sum_{I \cap L_i \not= \emptyset} 1}$                       &                  
  +     & ${\displaystyle\sum_{ I \cap \{v^\Set_j,\ov^\Set_j\} \neq \emptyset} 1}$                      
        & =                                                                          & $|C|(n+m) + |P|$ 
 \end{tabular}
 \medskip
 
 Because $|I| = k + |C|(n+m)$, we conclude that $|P| \leq k$.
 
 Obviously, the second and third items are equivalent since they are both equivalent to the first item, which concludes the proof.
\end{proof}

\subsection{Binary Compression of Gadgets}

The Theorem~\ref{thm:setsize} is a powerful tool to get reductions, in particular in the planar cases. However, the number of involved gadgets does not allow to use Theorem~\ref{thm:w2bound}. This limitation is due to the  uses of a gadget per vertex to identify in the distance identifying graphs. Using power set, we may obtain a better order: given $k$ gadgets, we may identify $2^k-1$ vertices (we avoid to identify a vertex with the empty subset of gadgets). Thus, we will consider \textit{binary representations} of integers as sequences of bits, with weakest bit at last position. For a positive integer $n$, we define the integer $\len{n} =  1+\lfloor \log_2(n) \rfloor$ and introduce a new graph:

\begin{definition}[$(H,B,r)$-compressed graph]
 Let $(\Omega = \{u_i \; | \; i \in \intset{n}\},\mathcal{S} = \{S_i \; | \; i \in \intset{m}\})$ be an instance of \HS{}. Let $H$ be a connected graph, $B$ be a subset of its vertices, and $r$ be a positive integer. The \textit{$(H,B,r)$-compressed graph $\Psi[H,B,r](\Omega,\mathcal{S})$} is defined as follows. $\Psi[H,B,r](\Omega,\mathcal{S})$ contains as induced subgraphs $\len{n+1}$ copies of $H$ denoted $H^{\Omega}_i$ for $i \in \intset{\len{n+1}}$ and $\len{m}$ other copies of $H$ denoted $H^{\mathcal{S}}_j$ for $j \in \intset{\len{m}}$. Then:
 
 \begin{itemize}
  \item for each $j \in \intset{m}$, we add two non-adjacent vertices $v^{\mathcal{S}}_j$ and $\bar{v}^{\mathcal{S}}_j$. They are $B^{\mathcal{S}}_k$-adjacent for each $k \in \intset{\len{m}}$ such that the $k^{th}$ bit of the binary representation of $j$ is $1$.
  \item for each $i \in \intset{n}$, we add $r$ vertices denoted $l_i^{j-1}$ with $j \in \intset{r}$ to form a fresh path such that $d(v^{\Omega}_i,l_i^{j-1}) = j-1$ where $v^{\Omega}_i = l_i^0$. We make $v^{\Omega}_i$ $B^{\Omega}_k$-adjacent for each $k \in \intset{\len{n+1}}$ such that the $k^{th}$ bit of the binary representation of $i$ is $1$.
  \item for each $S_j \in \mathcal{S}$ and each $u_i \in S_j$, we add the edge $(l_i^{r-1},v^{\mathcal{S}}_j)$.
  \item we add $r$ vertices denoted $a^{j-1}$ with $j \in \intset{r}$ to form a path such that $d(a^0,a^{j-1}) = j-1$. The vertex $a^0$ is $B^{\Omega}_k$-adjacent for every $k \in \intset{\len{n+1}}$, and we add the edges $(a^{r-1},v^{\mathcal{S}}_j)$ and $(a^{r-1},\bar{v}^{\mathcal{S}}_j)$ for each $j \in \intset{m}$.
 \end{itemize}
\end{definition}

\noindent By definition of $\len{n+1}$, for every $i \in \intset{n}$, one of the last $\len{n+1}$ bits of the binary representation of $i$ is $0$. So, $a^0$ has a distinct characterization in the power set formed by the gadgets $H^\Omega_i$. See Figure~\ref{fig:compressed_graph} for an example of $(H,B,r)$-compressed graph.

\begin{figure}
 \centering
 \includegraphics[width = .37\linewidth]{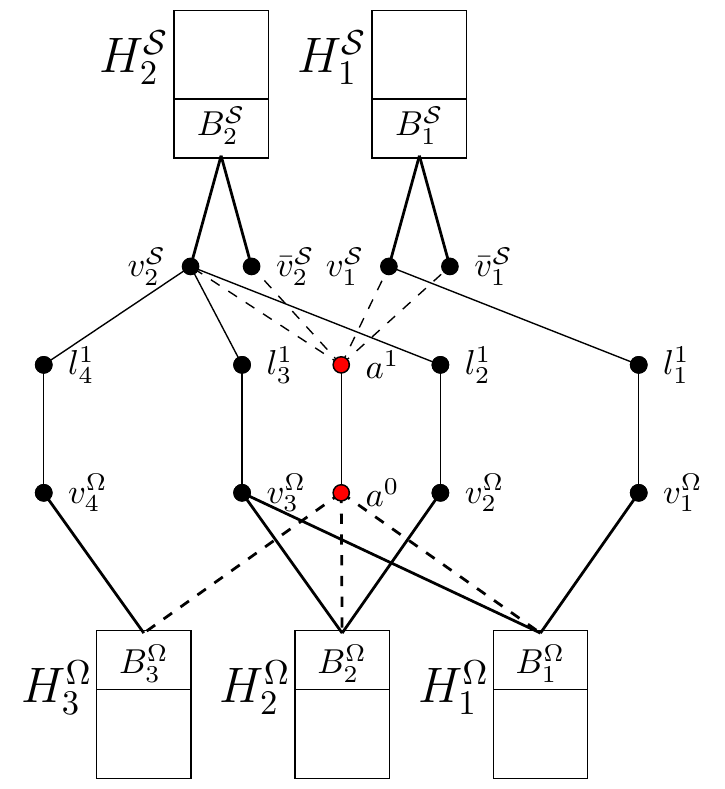}
 \caption{The $(H,B,2)$-compressed graph where $\Omega = \{1,2,3,4\}$ and \mbox{$\mathcal{S} = \{\{1,2\}, \{2,3,4\}\}$}.}%
 \label{fig:compressed_graph}
\end{figure}

\begin{proposition} \label{prop:graphsizeG}
 The graph $G = \Psi[H,B,r](\Omega,\mathcal{S})$ built on an instance $(\Omega,\mathcal{S})$ of \HS{}\begin{itemize}
  \item is connected and has size at most $|H|(\len{n+1}+\len{m})+r(n+1)+2m$,  where $|\Omega| = n$ and $|\mathcal{S}| = m$
  \item may be built in polynomial time in its size,
  \item is bipartite if the $B$-single extension of $H$ is bipartite.
 \end{itemize}
\end{proposition}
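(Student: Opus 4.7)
The plan is to verify the three bullets (together with connectedness) directly from the definition of the compressed graph; only the bipartite claim requires genuine care, so the counting arguments come first and the colouring argument comes last.

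For the size bound I would simply add up the vertex contributions: $|H|(\len{n+1}+\len{m})$ from the gadget copies $H^\Omega_i$ and $H^\mathcal{S}_j$; $rn$ from the path vertices $l_i^0,\dots,l_i^{r-1}$ (noting that $v^\Omega_i = l_i^0$, so exactly $r$ vertices are introduced per $i$); $r$ from the path $a^0,\dots,a^{r-1}$; and $2m$ from the pairs $v^\mathcal{S}_j,\bar{v}^\mathcal{S}_j$. Merging the two path contributions into $r(n+1)$ yields the stated bound, and polynomial-time constructibility is immediate, since the adjacency of each outside vertex with the gadget borders is read off from the binary expansion of its index and every other edge is explicit.

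For connectedness I would use $a^0$ as a hub. By construction $a^0$ is $B^\Omega_k$-adjacent to every copy $H^\Omega_k$ with $k \in \intset{\len{n+1}}$, so all such copies are linked through $a^0$. The path $a^0\cdots a^{r-1}$ attaches every $v^\mathcal{S}_j$ and $\bar{v}^\mathcal{S}_j$ to this hub, and each $H^\mathcal{S}_k$ with $k \in \intset{\len{m}}$ is reached through $v^\mathcal{S}_{2^{k-1}}$; this vertex exists since $2^{k-1} \le m$ by definition of $\len{m}$, and it is $B^\mathcal{S}_k$-adjacent by construction. Finally, each remaining path $l_i^0,\dots,l_i^{r-1}$ is joined to some $H^\Omega_k$ through $v^\Omega_i = l_i^0$, because $i \ge 1$ has at least one nonzero bit in $\intset{\len{n+1}}$.

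The hard part will be the bipartite case. Assuming the $B$-single extension of $H$ is bipartite, its extra vertex is adjacent to every vertex of $B$, so $B$ is monochromatic in any proper $2$-colouring; hence each copy of $H$ admits a $2$-colouring with monochromatic border, and this orientation may be flipped independently in each copy. I would then colour $G$ by declaring $a^0$ and each $l_i^0 = v^\Omega_i$ white and propagating along the paths $a^*$ and $l_i^*$ by parity; orient the bipartition of each $H^\Omega_k$ so that $B^\Omega_k$ is black (consistent with the white $v^\Omega_i$'s and $a^0$); deduce from the edges $(a^{r-1},v^\mathcal{S}_j)$ and $(a^{r-1},\bar{v}^\mathcal{S}_j)$ the colours of $v^\mathcal{S}_j$ and $\bar{v}^\mathcal{S}_j$ (opposite to $a^{r-1}$); and orient the bipartition of each $H^\mathcal{S}_k$ accordingly. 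The only remaining edges to check are the cross edges $(l_i^{r-1},v^\mathcal{S}_j)$: both $l_i^{r-1}$ and $a^{r-1}$ receive the same parity-based colour, while $v^\mathcal{S}_j$ is forced opposite to $a^{r-1}$, so these edges are properly bichromatic for every value of $r$. The conceptual point to guard against is forgetting the freedom to flip each gadget copy's bipartition independently, which otherwise suggests a spurious parity obstruction for odd $r$.
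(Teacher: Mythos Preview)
Your proposal is correct and aligns with the paper's approach: the size bound is obtained by exactly the same vertex count, while the paper dismisses the remaining items as ``direct by construction,'' so your explicit arguments for connectedness (via the hub $a^0$) and bipartiteness (via the $2$-colouring with independently flippable gadget copies) simply fill in details the paper omits. In short, your treatment is more thorough than the paper's, not different from it.
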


\begin{proof}
 The graph $G$ is formed by the union of $\len{n+1} + \len{m}$ copies of $H$, one vertex per variable, two vertices per clause, $n$ paths of $r-1$ vertices and one path of size $r$. Thus, in total, we have  \[(\len{n+1}+\len{m})|H|+n+2m+n(r-1)+r= |H|(\len{n+1}+\len{m})+r(n+1)+2m\]
 Finally, the two last items of the proposition are also direct by construction.
\end{proof}

\begin{theorem}
 \label{thm:setsizeG}
 Let $(\Omega,\mathcal{S})$ be an instance of \HS{} such that $|\Omega| = n$, $|\mathcal{S}| = m$. Let $(H,B,C)$ be a $(f,r)$-gadget for a $1$-\layered{} identifying function $f$ and let $(H',B',C')$ be a local $(g,q)$-gadget. The following propositions are equivalent: \begin{itemize}
  \item there exists a hitting set of $\mathcal{S}$ of size $k$.
  \item there exists a $(f,r)$-distance identifying set of $\Psi[H,B,1](\Omega,\mathcal{S})$ of size $k+|C|(\len{n+1}+\len{m})$.
  \item there exists a $(q,r)$-distance identifying set of $\Psi[H',B',q](\Omega,\mathcal{S})$ of size $k+|C'|(\len{n+1}+\len{m})$.
 \end{itemize}
\end{theorem}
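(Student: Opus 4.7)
The plan is to mirror the two equivalences of Theorem~\ref{thm:setsize}, using the same candidate identifying set but adapting the distinguishing argument to exploit the binary signatures of $B$-adjacencies instead of dedicated per-vertex gadgets. Concretely, given a hitting set $P$ of $\Set$ of size $k$, I would take
\[
I \;=\; \{v^\Omega_i : u_i \in P\} \;\cup\; \bigcup_{k \in \intset{\len{n+1}}} C^\Omega_k \;\cup\; \bigcup_{\ell \in \intset{\len{m}}} C^\Set_\ell ,
\]
whose size is exactly $k + |C|(\len{n+1}+\len{m})$ (resp.\ $k + |C'|(\len{n+1}+\len{m})$ using the local gadget).

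For the forward direction I would check the two required properties in turn. The $r$-dominating property follows from $(p_d)$ inside each copy of $H$ (or $H'$); the vertices $l_i^{k}$ and $a^{k}$ lying outside the gadgets are at distance at most $r$ from some copy $H^\Omega_\ell$, and $(p_l)$ furnishes an element of $C^\Omega_\ell$ close enough to dominate them. For $f$-distinguishing, pairs involving a gadget vertex are handled by $(p_h)$, so I only need to separate special vertices of the form $v^\Omega_i$, $v^\Set_j$, $\ov^\Set_j$, $l_i^k$, $a^k$. Any two distinct such vertices with different $B$-adjacency signatures are distinguished by applying $(p_b)$ to a copy $H^\Omega_k$ or $H^\Set_\ell$ where the signatures differ; here I crucially use the choice of $\len{n+1}$, which guarantees that every $v^\Omega_i$ has at least one bit equal to $0$ and therefore a different signature from $a^0$ (whose bits are all $1$). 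The only pair with identical signature is $(v^\Set_j, \ov^\Set_j)$: applying the hitting set property, there exists $u_i \in P \cap S_j$, and in $\Psi[H,B,1]$ (resp.\ $\Psi[H',B',q]$) one has $d(v^\Omega_i, v^\Set_j)=1<d(v^\Omega_i,\ov^\Set_j)$ (resp.\ $=q<$), so by $(\gamma)$ (resp.\ $(\beta_1)$) the vertex $v^\Omega_i \in I$ distinguishes the twins.

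For the backward direction, given a distance identifying set $I$ of the stated size, I would set $L_i = \{v^\Omega_i\} \cup \{l_i^{k} : k \in \intset{r-1}, \exists j,\, u_i \in S_j\}$ as in Theorem~\ref{thm:setsize} and define
\[
P \;=\; \{u_i : I \cap L_i \neq \emptyset\} \;\cup\; \{u_{\varphi(j)} : v^\Set_j \in I \text{ or } \ov^\Set_j \in I\},
\]
for an arbitrary choice function $\varphi$ with $u_{\varphi(j)} \in S_j$. The core claim is that only vertices in the twins themselves or in some $L_i$ with $u_i \in S_j$ can $f$-distinguish $(v^\Set_j, \ov^\Set_j)$: every vertex of $H^\Set_j$ is at identical distance from both twins and so ruled out by $(\alpha)$; the apex-like vertex $a^{r-1}$ is also symmetric; in the local case every other vertex sits outside the $r$-neighbourhood of both twins and is excluded by $(\beta_2)$, while in the $1$-\layered{} case the remaining candidates at distance $\geq 2$ are symmetric (again by $(\alpha)$). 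Thus each $S_j$ must be hit by $P$. Counting via the partition of $I$ into its intersections with the $H^\Omega_k$'s, the $H^\Set_\ell$'s, the $L_i$'s, and the twin pairs, and using $(p_s)$, gives $|P| \leq k$.

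The main obstacle will be the case analysis for distinguishing the special vertices, which is where the compressed construction really differs from Theorem~\ref{thm:setsize}: now the gadgets are shared, so I cannot rely on a one-to-one correspondence between gadgets and vertices. The key technical point is to enumerate all pairs of special vertices and exhibit, for each, a coordinate $k$ where the binary signatures differ, then invoke $(p_b)$ on $H^\Omega_k$ or $H^\Set_k$; the asymmetry $a^0 \neq$ signature of any $v^\Omega_i$ is the one that motivates the use of $\len{n+1}$ rather than $\len{n}$, and verifying it cleanly for all cross-type pairs (e.g.\ $v^\Omega_i$ vs.\ $l_{i'}^{k}$, or $a^{k}$ vs.\ $v^\Set_j$) is where the bulk of the bookkeeping lies. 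Once this is done, the third equivalence follows verbatim from the second by the same argument applied to $(H',B',C')$, $g$, and $q$.
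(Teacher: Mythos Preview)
Your plan matches the paper's proof almost exactly: same candidate set $I$, same use of $(p_h)$ for gadget vertices, $(p_b)$ together with the binary ``signature'' for the special vertices, the hitting-set property plus $(\gamma)$/$(\beta_1)$ for the twin pair, and $(p_s)$ for the size bound in the reverse direction.

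There is one concrete slip in your reverse implication for the local case. You write that, apart from $H^{\mathcal S}_j$ and $a^{r-1}$, ``every other vertex sits outside the $r$-neighbourhood of both twins and is excluded by $(\beta_2)$''. This is false in $\Psi[H',B',q]$: all the $a^k$ (not just $a^{q-1}$) satisfy $d(a^k,v^{\mathcal S}_j)=d(a^k,\bar v^{\mathcal S}_j)=q-k\le q$; each $v^{\mathcal S}_{j'},\bar v^{\mathcal S}_{j'}$ with $j'\neq j$ is at distance $2$ from both twins via $a^{q-1}$; the vertices of the gadgets $H^{\mathcal S}_\ell$ to which the twins are $B$-adjacent lie in their $q$-neighbourhood; and for $u_i\notin S_j$ the path vertex $l_i^k$ reaches both twins at distance $q-k+2$ through some $v^{\mathcal S}_{j'}$ with $u_i\in S_{j'}$ and then $a^{q-1}$. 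So $(\beta_2)$ does not apply to these vertices. The paper handles the local case exactly as the $1$-layered case: it checks, for every vertex outside $\{v^{\mathcal S}_j,\bar v^{\mathcal S}_j\}\cup\bigcup_{u_i\in S_j}L_i$, that the distances to the two twins are \emph{equal} (the apex path $a^0,\dots,a^{q-1}$ is what makes this symmetry work) and then invokes $(\alpha)$. Replace your $(\beta_2)$ step by this equidistance enumeration and the argument goes through; everything else in your outline is correct.
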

\begin{proof}
 Suppose again that $P$ is a hitting set of $(\Omega,\Set)$ of size $k$. By denoting $C^\Omega_i$ and $C^\Set_j$ the copy of $C$ (respectively $C'$) associated to the copy $H^\Omega_i$ and $H^\Set_j$ of $H$ (respectively $H'$), we suggest the following set $I$ of size $k+|C|(\len{n+1}+\len{m})$ as a $(f,r)$-distance identifying set of $G = \Psi[H,B,1](\Omega,\Set)$ (respectively $(g,q)$-distance identifying set of $G' = \Psi[H',B',q](\Omega,\Set))$:
 $$ I = \{ v^\Omega_i \colon u_i \in P \} \; \cup \bigcup_{i \, \in \, \intset{\len{n+1}}} C^\Omega_i \; \cup \bigcup_{j \, \in \, \intset{\len{m+1}}}  C^\Set_j.$$
 
 By construction, $G$ is a $B^\Omega_i$-extension of $H^\Omega_i$ (respectively $B^\Set_j$-extension of $H^\Set_j$) for any $i \in \intset{\len{n+1}}$ (respectively $j \in \intset{\len{m}}$).
 This directly implies that $I$ is an $r$-dominating set of $G$ (respectively $q$-dominating set $G'$).
 
 We only have to show that $I$ $f$-distinguishes $G$ (respectively $G'$). Distinguishing the vertices of the gadget copies is still easy using the first item of Definition~\ref{def:gadgets}. Thereby, we only have to study the vertices of the form $v^\Set_j$, $\ov^\Set_j$, $l^k_i$, and $a^k$.
 To distinguish them, we mainly use the second item of Definition~\ref{def:gadgets} together with the characteristic function of the power set of the gadgets. We deduce that every couple of vertices is distinguished except when the two vertices are of the form $v^\Set_j$ or $\ov^\Set_{j}$ for $j \in \intset{m}$ (or if they are both of the form $a^k$ or $l^k_i$ for $k \in \intset{q-1}$ for $G'$, $G$ not containing such vertices).
 
 To distinguish $v^\Set_j$ and $\ov^\Set_{j}$ for $j \in \intset{m}$.
 We now use the fact that $P$ is a hitting set for $(\Omega,\Set)$.
 By definition of an hitting set, for any set $S_j \in \Set$, there exists a vertex $u_i \in P$ such that $u_i \in S_j$. We observe that $d(v^\Omega_i,v^\Set_j) = 1 < d(v^\Omega_i,\ov^\Set_j)$ by construction of $G$ (respectively $d(v^\Omega_i,v^\Set_j) = q < d(v^\Omega_i,\ov^\Set_j)$ by construction of $G'$) and that $v^\Omega_i \in I$ by definition of $I$. Since $f$ is $1$-\layered{} (respectively $g$ is $q$-local), $I$ $f$-distinguishes and $g$-distinguishes $v^\Set_j$ and $\ov^\Set_j$.
 
 For $G'$, it remains to distinguish $a^k$ and $l^k_i$ for $k \in \intset{q-1}$ and $i \in \intset{n}$. We recall that in a $(g,q)$-local gadget $(H',B',C')$, there exists $c \in C'$ such that $d(c,B') = k-1$ for each $k \in \intset{q}$. Then we may use characteristic function of the power set together with property $(\beta_1)$ of a $q$-local function to distinguish them.\medskip
 
 In the other direction, assume that $I$ is a $(f,r)$-distance identifying set of $G$ of size $k+|C|(\len{n+1}+\len{m})$ (respectively $(g,q)$-distance identifying set of $G$ of size $k+|C'|(\len{n+1}+\len{m})$).
 As every set of $S$ is not empty, we may define a function $\varphi : \intset{m} \to \intset{n}$ such that $u_{\varphi(j)} \in S_j$.
 
 We suggest the following set $P$ as an hitting set of $\mathcal{S}$ of size at most $k$:
 $$
  P = \{ u_i \in \Omega \; | \; l^k_i \in I \text{ for any } k \in \intset{r-1} \} \cup \{ u_{\varphi(j)} \in \Omega \; | \; v^\Set_j \in I \text{ or } \ov^\Set_j \in I\}
 $$
 
 The size of $P$ is ensured by the fourth item of Definition~\ref{def:gadgets} of a gadget.
 
 We claim that the only vertices that may $f$-distinguish the couple $(v^\Set_j, \ov^\Set_j)$ are themselves and the vertices of form $l^k_i$ such that $u_i \in S_j$. To prove so, we apply propriety $(\alpha)$ from Definition~\ref{def:dif} (a vertex cannot distinguish two vertices at the same distance from it) on the following enumeration on $G$: \begin{itemize}
  \item the vertices $a^k$ verify $d(a^k,v^\Set_j) = r-k = d(a^k,\ov^\Set_j)$ for each $k \in \intset{r-1}$
  \item a vertex $l^k_i$ such that $u_i \not \in S_j$ verifies $d(l^k_i,v^\Set_j) = 2+r-k  = d(l^k_i,\ov^\Set_j)$ because of $a^{r-1}$
  \item a vertex $v$ of $H^\Omega_i$ verifies $d(v,v^\Set_j) = d(v,B^\Omega_i)+1+r = d(v,\ov^\Set_j)$ because of the path formed by the vertices of form $a^k$.
  \item a vertex $v$ of $H^\Set_{j'}$ with $j \neq j'$ verifies $d(v,v^\Set_j) = d(v,B^\Set_{j'})+3 = d(v,\ov^\Set_j)$ because of $a^{r-1}$
  \item both $v^\Set_j$ and $\ov^\Set_j$ are $B^\Set_j$-adjacent, so they are at the same distance of any vertex of $H^\Set_j$.
 \end{itemize}
 The enumeration on $G'$ is identical when replacing $r$ by $q$.
 We deduce that $v^\Set_j$ and $\ov^\Set_j$ are $f$-distinguished if and only if either one on them belongs to $I$ (in that case $u_{\varphi(j)} \in P \cap S_j$) or if there exists $l^k_i \in I$ such that $u_i \in S_j$ and $k+1 \in \intset{r}$ (and then $u_i \in P \cap S_j$).
\end{proof}

\section{On Providing Gadgets to Establish Generic Reductions}\label{sec:cis}

In this section, we finalize the reductions by furnishing some gadgets and combining them with the suitable theorems and propositions from Section~\ref{sec:sup}. The existence of the gadgets rely on the following tool lemma:
\begin{lemma}[Twins Lemma]
 \label{lemma:twin}
 Let $x$ and $y$ be two vertices of a graph $G$ such that \mbox{$N(x) = N(y)$}.
 Then any distance identifying set of $G$ contains either $x$ or $y$.
\end{lemma}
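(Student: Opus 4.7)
The plan is to prove the lemma by showing that no vertex outside $\{x,y\}$ can $f$-distinguish $x$ from $y$, whence any distance identifying set (which must distinguish $x$ from $y$ since $x \neq y$) has to contain at least one of them.

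The key step is to establish that if $w \notin \{x,y\}$, then $d(w,x) = d(w,y)$. First, I would note that $x$ and $y$ cannot be adjacent: indeed, $y \in N(x)$ would force $y \in N(y)$ by the twin hypothesis, which is impossible in a simple graph. Now pick a shortest $w$-to-$x$ path and let $v$ be the vertex of this path adjacent to $x$. Then $v \in N(x) = N(y)$, and $v \neq y$ because $y \notin N(x)$. Hence $v$ is also a neighbor of $y$, so
\[
d(w,y) \le d(w,v)+1 = d(w,x).
\]
Exchanging the roles of $x$ and $y$ yields the reverse inequality, giving $d(w,x) = d(w,y)$.

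Once the distance equality is secured, the conclusion follows directly from property $(\alpha)$ of Definition~\ref{def:dif}: since $f$ is a distance identifying function, $d(w,x) = d(w,y)$ forces $f_G[w](x,y) = \false$ for every $w \notin \{x,y\}$. Consequently, if a set $C$ is an $(f,r)$-distance identifying set yet contains neither $x$ nor $y$, no element of $C$ would $f$-distinguish the (distinct) vertices $x$ and $y$, contradicting the definition of a distance identifying set. Therefore $x \in C$ or $y \in C$.

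I do not anticipate a real obstacle here; the only subtlety worth double-checking is the case where the chosen shortest $w$-to-$x$ path might pass through $y$, which is ruled out by the observation that $y$ is not adjacent to $x$ and hence cannot be the penultimate vertex $v$ on the path. Beyond that, the argument is a clean application of the distance axiom $(\alpha)$.
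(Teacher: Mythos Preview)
Your proof is correct and follows exactly the same approach as the paper's: show that every $w\notin\{x,y\}$ satisfies $d(w,x)=d(w,y)$ and then invoke property~$(\alpha)$. The only difference is that you spell out the distance equality via the penultimate-vertex argument, whereas the paper simply asserts it in one line.
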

\begin{proof}
 Because $N(x) = N(y)$, for every vertex $u$ of $G$, if $u \not \in \{x,y\}$, then $d(u,x) = d(u,y)$.
 Thus, by property $(\alpha)$ of a distance identifying set, $u$ may distinguishes $x$ and $y$ if and only if $u \in \{x,y\}$, implying that a distance identifying set must contain either $x$ or $y$.
\end{proof}

The gadgets are defined as follows:

\begin{definition}[The $1$-\layered{} gadget] Let $H$ be the bipartite planar graph such that:\begin{itemize}
  \item Its ten vertices are denoted $b$, $\bar{b}$, $u_1$, $\bar{u}_1$, $u_2$, $\bar{u}_2$, $v_1$, $\bar{v}_1$, $v_2$ and $\bar{v}_2$,
  \item The vertices $u_1, u_2, \bar{u}_1$ and $\bar{u}_2$ form a cycle as well as the vertices $v_1, v_2, \bar{v}_1$ and $\bar{v}_2$.
  \item The vertices $b$ and $\bar{b}$ are adjacent to $u_1$, $\bar{u}_1$, $v_1$ and $\bar{v}_1$.
 \end{itemize}
 We define the sets $B = \{b,\bar{b}\}$ and $C = \{b, u_1, u_2, v_1, v_2\}$.\\
 The triple $(H,B,C)$ is called \textit{the $1$-\layered{} gadget} (see Fig.~\ref{fig:1cgadget}).
\end{definition}

\begin{figure}
 \centering
 \includegraphics[width = 0.32\linewidth]{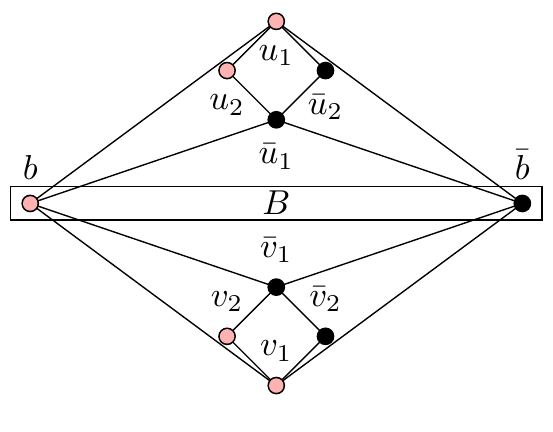}
 \caption{The $1$-\layered{} gadget $(H,B,C)$.  $C$ contains the colored vertices.}\label{fig:1cgadget}
\end{figure}

\begin{definition}[The $r$-local $0$-\layered{} gadget] Given an integer $r > 1$ (respectively $r = 1$), let $H_r$ be the bipartite planar graph of size $4r+2$ (respectively 8) such that: \begin{itemize}
  \item its vertices are denoted $a_i$ and $b_i$ for $i \in \intset{2r+1}$ (respectively $i \in \intset{4}$),
  \item for each $i \in \intset{2r}$ (respectively $i \in \intset{3}$), both $a_i$ and $b_i$ are adjacent to $a_{i+1}$ and $b_{i+1}$
 \end{itemize}
 We define the sets $B_r = \{a_1,b_1\}$ and $C_r = \{a_i \ | \ i \in \intset{2r+1}\}$ (respectively $C_1 = \{a_1,a_2,a_3,a_4\}$). The triple $(H_r,B_r,C_r)$ is called \textit{the $r$-local $0$-\layered{} gadget} (see Fig.~\ref{fig:rl0cgadget}).
\end{definition}

For each positive integer $r$, $r$-\aLD{} and $r$-\aMD{} are $r$-local $0$-\layered{} problems, whereas $r$-\aIC{} is not $0$-\layered{}. We define a specific gadget for this remaining problem.

\begin{definition}[The $r$-\aIC{} gadget] Given a positive integer $r$, let $H_r$ be the bipartite planar graph of size $6r+4$ such that: \begin{itemize}
  \item its vertices are denoted $a_{i-1}$ and $b_{i-1}$ for $i \in \intset{r+2}$, and $a^j_i$ and $b^j_i$ for $i \in \intset{r}$ and $j \in \intset{2}$. We also denote $a_0$ as $a^1_{r+1}$ and $a^2_{r+1}$ and we denote $b_0$ as $b^1_{r+1}$, $b^2_{r+1}$, $a^1_0$ and $a^2_0$.
  \item the edges are all included in the six following paths \begin{itemize}
         \item from $a_0$ to $a_{r+1}$ such that $d(a_0,a_i) = i$ for $i \in \intset{r+1}$.
         \item from $b_0$ to $b_{r+1}$ such that $d(b_0,b_i) = i$ for $i \in \intset{r+1}$.
         \item from $a_0^1$ to $a_{r+1}^1$ such that $d(a_0^1,a_i^1) = i$ for $i \in \intset{r+1}$.
         \item from $a_0^2$ to $a_{r+1}^2$ such that $d(a_0^2,a_i^2) = i$ for $i \in \intset{r+1}$.
         \item from $b_1^1$ to $b_{r+1}^1$ such that $d(b_1^1,b_i^1) = i-1$ for $i \in \intset{r+1}$.
         \item from $b_1^2$ to $b_{r+1}^2$ such that $d(b_1^2,b_i^2) = i-1$ for $i \in \intset{r+1}$.
        \end{itemize}
 \end{itemize}
 We define the sets $B_r = \{b_1^1,b_1^2\}$ and $C_r = \{a_{r+1}, b_{r+1}\} \cup \underset{i \in \intset{r+1}}{\bigcup} \{a_i^1,b_i^1\}$.\\
 The triple $(H_r,B_r,C_r)$ is called \textit{the $r$-\aIC{} gadget} (see Fig.~\ref{fig:IC_gadget}).
\end{definition}

\begin{figure}
 \begin{minipage}{0.47\linewidth}
  \centering
  \includegraphics[width = 0.6\linewidth]{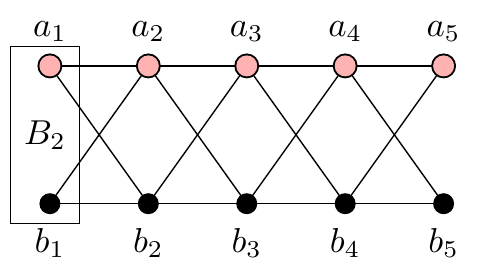}%
  \caption{The $2$-local $0$-\layered{} gadget. $C_2$ contains the colored vertices.}%
  \label{fig:rl0cgadget}%
 \end{minipage} \hfill%
 \begin{minipage}{0.47\linewidth}
  \centering
  \includegraphics[width = 0.7\linewidth]{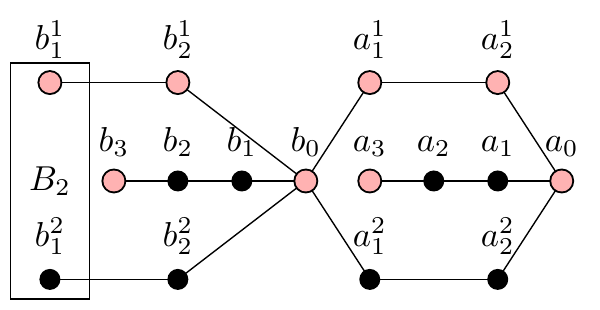}%
  \caption{The $2$-\aIC{} gadget. $C_2$ contains the colored vertices.}%
  \label{fig:IC_gadget}%
 \end{minipage}%
\end{figure}

As expected, we have the following propositions:
\begin{proposition} \label{prop:1cgadget} The $1$-\layered{} gadget is a bipartite planar $(f,r)$-gadget for any $1$-\layered{} distance identifying function $f$ and $r \in \inftyset$.
\end{proposition}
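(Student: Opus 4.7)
The plan is to verify the four conditions of Definition~\ref{def:gadgets} together with bipartiteness and planarity. Bipartiteness follows by exhibiting the bipartition $V_H = \{b, \bar{b}, u_2, \bar{u}_2, v_2, \bar{v}_2\} \sqcup \{u_1, \bar{u}_1, v_1, \bar{v}_1\}$ of $H$ and placing any $B$-adjacent external vertex on the side opposite to $B = \{b, \bar{b}\}$. The $B$-twin-extension remains planar: the underlying graph decomposes as a $K_{2,4}$ on $\{b, \bar{b}\} \cup \{u_1, \bar{u}_1, v_1, \bar{v}_1\}$, completed by two $4$-cycles on the $u$'s and $v$'s attached at $\{u_1, \bar{u}_1\}$ and $\{v_1, \bar{v}_1\}$ respectively, plus a further $K_{2,2}$ between $\{b, \bar{b}\}$ and the two added external vertices; these pieces fit in the plane without crossings. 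Finally, $(p_d)$ is immediate at distance one: each ``barred'' vertex of $V_H$ is adjacent to some element of $C$ (for instance $\bar{u}_2$ to $u_1$ and $\bar{u}_1$, $\bar{v}_1$ to $b$), and every $B$-adjacent vertex is adjacent to $b \in C$, so $C$ is an $r$-dominating set of $G[V_H \cup N_B]$ for every $r \in \inftyset$.

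The condition $(p_s)$ follows from the Twins Lemma (Lemma~\ref{lemma:twin}). Five twin pairs can be read off $H$: the two vertices of $(b, \bar{b})$ share the neighborhood $\{u_1, \bar{u}_1, v_1, \bar{v}_1\}$; those of $(u_1, \bar{u}_1)$ share $\{b, \bar{b}, u_2, \bar{u}_2\}$; those of $(u_2, \bar{u}_2)$ share $\{u_1, \bar{u}_1\}$, and symmetrically for $(v_1, \bar{v}_1)$ and $(v_2, \bar{v}_2)$. Since external vertices in a $B$-extension only attach through $B$, none of the inner pairs gain new neighbors, while $b$ and $\bar{b}$ gain exactly the same external $B$-adjacent neighbors; thus the five pairs remain twins in every $B$-extension $G$. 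Applying the Twins Lemma forces any distance identifying set $S$ of $G$ to contain at least one vertex from each pair, whence $|S \cap V_H| \geq 5 = |C|$.

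The main step is verifying $(p_h)$ and $(p_b)$. To each vertex $u \in V_G$ I assign the signature $\sigma(u) = (\sigma_w(u))_{w \in C}$ with $\sigma_w(u) = d_G(u,w)$ when $d_G(u,w) \leq 1$ and $\sigma_w(u) = \infty$ otherwise. By property $(\gamma)$ of a $1$-\layered{} function, any two vertices with distinct signatures are $f$-distinguished by the coordinate where their signatures differ. Since all paths from $V_G \setminus V_H$ to $V_H$ pass through $B$, each external vertex has signature $(1, \infty, \infty, \infty, \infty)$ if it is $B$-adjacent and $(\infty, \infty, \infty, \infty, \infty)$ otherwise; these two external signatures differ at the $b$-coordinate, giving $(p_b)$. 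A direct enumeration of the distances within the gadget, simplified by the symmetries $b \leftrightarrow \bar{b}$, $u_i \leftrightarrow \bar{u}_i$, $v_i \leftrightarrow \bar{v}_i$ and $u \leftrightarrow v$, yields the ten signatures of the vertices of $V_H$; one checks that they are pairwise distinct and that none coincides with either external signature, establishing $(p_h)$. The only real obstacle is keeping this final case analysis organized.
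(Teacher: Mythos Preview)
Your proof is correct and follows essentially the same approach as the paper: the Twins Lemma on the five pairs $(b,\bar b),(u_1,\bar u_1),(u_2,\bar u_2),(v_1,\bar v_1),(v_2,\bar v_2)$ yields $(p_s)$, and the closed-neighbourhood pattern with respect to $C$ (your ``signature'' $\sigma$) handles $(p_h)$ and $(p_b)$. The paper simply phrases the latter as a short case analysis on which elements of $C$ each vertex outside $C$ is adjacent to, after noting that $f$ being $0$-\layered{} lets each $c\in C$ distinguish itself from every other vertex; your signature formulation is a clean repackaging of the same argument.
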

\begin{proof}
 We have to check the four conditions to be a $(f,r)$-gadget.
 Consider a $B$-extension $G$ of $H$.
 Clearly, $(p_d)$ is satisfied as $C$ is even a $1$-dominating set of $V_H \cup N_B$. The condition $(p_s)$ is also easily verified using the Twins Lemma~\ref{lemma:twin} on the distinct pairs $(b,\bar{b})$, $(u_1,\bar{u}_1)$, $(u_2,\bar{u}_2)$, $(v_1,\bar{v}_1)$ and $(v_2,\bar{v}_2)$. To prove $(p_h)$ and $(p_b)$, we only have to study vertices not belonging to $C$ (as $f$ is $0$-\layered{}).
 Remark that:\begin{itemize}
  \item $\bar{u}_1$ is the only vertex outside of $C$ that is adjacent to $u_2$.
  \item $\bar{v}_1$ is the only vertex outside of $C$ that is adjacent to $v_2$
  \item $\bar{u}_2$ is the only vertex outside of $C$ that is adjacent to $u_1$ and not adjacent to $v_1$.
  \item $\bar{v}_2$ is the only vertex outside of $C$ that is adjacent to $v_1$ and not adjacent to $u_1$.
  \item $\bar{b}$ is the only vertex outside $C$ both adjacent to $u_1$ and $v_1$
  \item a $B$-adjacent vertex is not adjacent to $u_1$ nor to $v_1$ but is adjacent to $b$
  \item finally, a vertex from $V_{G \setminus H}$ which is not $B$-adjacent is neither adjacent to $u_1$, $v_1$ nor $b$.
 \end{itemize}
 Therefore properties $(p_h)$ and $(p_b)$ are satisfied by $(H,B,C)$ which is a $1$-\layered{} gadget for $f$.
\end{proof}

\begin{proposition} \label{prop:rl0cgadget} Given a positive integer $r$, the $r$-local $0$-\layered{} gadget is a local bipartite planar $(f,r)$-gadget for every $r$-local $0$-\layered{} distance identifying function $f$.
\end{proposition}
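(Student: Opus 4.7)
The plan is to verify the five conditions $(p_d), (p_h), (p_b), (p_s), (p_l)$ of Definitions~\ref{def:gadgets}--\ref{def:localgadgets}, together with the bipartite and planar requirements. The bipartite and planar claims read directly off Figure~\ref{fig:rl0cgadget}: $H_r$ is bipartite via the parity of the level index, and is planar as a stack of $C_4$'s attaining the tight bipartite planar edge bound $2|V_H|-4$; adding one or two $B$-adjacent vertices on the outer face containing $\{a_1,b_1\}$ preserves both properties. Property $(p_l)$ is immediate since $a_k \in C_r$ realizes $d(a_k,B_r)=k-1$ for every $k \in \intset{r}$, and $(p_d)$ follows because $C_r$ is in fact a $1$-dominating set of $V_H \cup N_B$: each $b_i$ is adjacent to some $a_{i\pm 1} \in C_r$, and each $B$-adjacent vertex is adjacent to $a_1 \in C_r$. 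For $(p_s)$ I would invoke the Twins Lemma~\ref{lemma:twin}: in any $B$-extension $G$, the vertices $a_i$ and $b_i$ share the same closed neighborhood (in $V_H$ both see $\{a_{i-1},b_{i-1},a_{i+1},b_{i+1}\}$, and outside $V_H$ they coincide, being nonempty only when $i=1$ where both are adjacent to every $B$-adjacent vertex of $G$), so every distance identifying set must intersect each of the $|C_r|$ twin pairs $\{a_i,b_i\}$.

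The heart of the argument lies in $(p_h)$ and $(p_b)$. Since $f$ is $r$-local $0$-\layered{}, a vertex $w \in C_r$ is guaranteed to $f$-distinguish $u$ and $v$ in exactly two situations: when $w \in \{u,v\}$ (by $(\gamma)$ applied at $i=0$) or when exactly one of $d(w,u), d(w,v)$ is at most $r$ (by $(\beta_1)$). The relevant distances in $G$ are $d(a_j,b_i) = |j-i|$ for $j \neq i$ and $d(a_j,b_j) = 2$; $d(a_j,v) = j$ for a $B$-adjacent $v$; and $d(a_j,v) \geq j+1$ for $v \in V_{G \setminus H} \setminus N_B$, since any path from $v$ to $V_H$ must first leave $V_{G\setminus H}$ through a $B$-adjacent vertex before entering $V_H$ at $B$. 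Property $(p_b)$ is then immediate via $c = a_r$: for $u \in N_B$ and $v \in V_{G\setminus H} \setminus N_B$ we get $d(u,c)=r \leq r < r+1 \leq d(v,c)$. For $(p_h)$, the cases $u \in C_r$ or $v \in C_r$ are handled by $c = u$ or $c = v$, leaving the case $u = b_i$ paired with either $v = b_{i'}$ or $v \in V_{G\setminus H}$. When $v \in V_{G \setminus H}$, the choice $c = a_{r+1}$ satisfies $d(c,b_i) \leq r$ for every $i \in \intset{2r+1}$ while $d(c,v) \geq r+1 > r$. When $v = b_{i'}$ with $i < i'$, the sets $I_u = \{j \in \intset{2r+1} : |j-i| \leq r\}$ and $I_v = \{j \in \intset{2r+1} : |j-i'| \leq r\}$ have nonempty symmetric difference: either $j = i+r+1$ lies in $I_v \setminus I_u$ when $i \leq r$, or $j = i'-r-1$ lies in $I_u \setminus I_v$ when $i' \geq r+2$, and $i < i'$ ensures that one of these conditions holds.

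The main obstacle is the small case $r=1$, where $d(a_i,b_i)=2 > r$ invalidates the clean equivalence $d(a_j,b_i) \leq r \Leftrightarrow |j-i| \leq r$ used above, and the three-level analogue would fail to distinguish $b_1$ from $b_3$ because every $a_j$ sits at the same distance from both. This is precisely why the definition substitutes a four-level gadget for $r=1$. I would settle this case by direct inspection of the finitely many vertex pairs, for instance by using $c = a_3$ instead of $c = a_2$ to distinguish $b_i$ for $i \in \{2,4\}$ from any vertex $v \in V_{G\setminus H}$, since then $d(a_3,b_i) = 1$ while $d(a_3,v) \geq 3$.
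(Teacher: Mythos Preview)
Your proposal is correct and follows essentially the same route as the paper: both verify $(p_d)$ via $1$-domination, $(p_s)$ via the Twins Lemma on the pairs $(a_i,b_i)$, $(p_l)$ via $d(a_k,B_r)=k-1$, and handle $(p_h),(p_b)$ by analysing which $a_j$'s lie in the $r$-neighbourhood of each vertex type, treating $r=1$ separately. The only cosmetic difference is that the paper tabulates the sets $N_r[\cdot]\cap C_r$ for each vertex class and leaves their pairwise distinctness implicit, whereas you exhibit an explicit witness $c\in C_r$ for each pair; one minor slip in your write-up is that the Twins Lemma needs $N(a_i)=N(b_i)$ (open neighbourhoods), not closed, though your description of the neighbourhood is in fact the open one.
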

\begin{proof}
 We have to check the five conditions to be a local $(f,r)$-gadget. Clearly, $(p_d)$ is satisfied as $C_r$ is even a $1$-dominating set of the $B_r$-single-extension of $H_r$. The condition $(p_s)$ is also easily verified using the Twins Lemma~\ref{lemma:twin} on the distinct couples $(a_i,b_i)$ for $i \in \intset{r+1}$ (or $i \in \intset{4}$ if $r = 1$). The local condition $(p_l)$ is satisfied as $d(a_i,B_r) = i-1$ for every $i \in \intset{2r+1}$. To prove $(p_h)$ and $(p_b)$, we only have to study vertices of form $b_i$ (as $f$ is $0$-\layered{}). When $r > 1$: \begin{itemize}
  \setlength\itemsep{0.1em}
  \item the $r$-neighbourhood of $b_{r+1}$ is $C$.
  \item for every $i \in \intset{r}$ the $r$-neighbourhood of $b_{i}$ is $\{a_j \; | \; j \in \intset{i+r}\}$, and the $r$-neighbourhood of $b_{2r+2-i}$ is $\{a_{2r+2-j} \; | \; j \in \intset{i+r}\}$
  \item the $r$-neighbourhood of a $B_r$-adjacent vertex is $\{a_i \; | \; i \in \intset{r}\}$
  \item finally, the $r$-neighbourhood of a vertex outside of $H_r$ which is not $B_r$-adjacent is the set \mbox{$\{a_i \; | \; i \in \intset{r-1}\}$}
 \end{itemize}
 In the specific case where $r = 1$: \begin{itemize}
  \setlength\itemsep{0.1em}
  \item $b_4$ is adjacent to $a_3$
  \item $b_3$ is adjacent to $a_2$ and $a_4$
  \item $b_2$ is adjacent to $a_1$ and $a_3$
  \item $b_1$ is adjacent to $a_2$
  \item a $B_1$-adjacent vertex is adjacent to $a_1$
  \item a vertex outside of $H_1$ which is not $B_1$-adjacent is not adjacent to any vertex from $C$.
 \end{itemize}
\end{proof}

\begin{proposition} \label{prop:rlgadget} Given a positive integer $r$, the $r$-\aIC{} gadget is a local bipartite planar $(f,r)$-gadget for the identifying function $f$ associated with $r$-\aIC{}, where $f_G[w](u,v) = \true$ if $w\in N_r[u] \Delta N_r[v]$ for relevant inputs $G$, $u$, $v$ and $w$.
\end{proposition}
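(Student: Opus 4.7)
The plan is to verify in turn the five conditions of a local $(f,r)$-gadget, where $f$ denotes the $r$-\aIC{} identifying function with $f_G[w](u,v) = \true$ iff $w \in N_r[u]\,\Delta\,N_r[v]$. Conditions $(p_l)$ and $(p_d)$ follow directly: for $(p_l)$, each vertex $b^1_k \in C_r$ satisfies $d(b^1_k, B_r) = k-1$ (since $b^1_1 \in B_r$), for all $k \in \intset{r}$; for $(p_d)$, the two contiguous $C_r$-paths $a^1_0, a^1_1, \ldots, a^1_{r+1}$ (of length $r+1$ joining $b_0$ to $a_0$) and $b^1_1, \ldots, b^1_{r+1}$ (of length $r$, ending at $b_0$), together with the pendant endpoints $a_{r+1}, b_{r+1}$, cover every vertex of $H_r$ within distance $r$, and each $B$-adjacent vertex of an extension lies at distance $1$ of $b^1_1 \in C_r$.

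For $(p_h)$ and $(p_b)$, since $f$ is $r$-local, distinguishing two vertices reduces to separating their $r$-signatures $N_r[\cdot] \cap C_r$. A case analysis will exploit that, for every vertex $u \in V_H$, the tuple of distances from $u$ to the consecutive vertices of the two contiguous $C_r$-paths yields a characteristic profile pinpointing $u$'s position in $H_r$. An external vertex reaches $C_r$ only through the border $B$, producing a profile disjoint from those of internal vertices and thus yielding $(p_h)$; $B$-adjacent vertices are further identified by balanced short distances to both $b^1_1$ and $b^2_1$, which yields $(p_b)$.

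The main obstacle is $(p_s)$: proving that every $(f,r)$-distance identifying set $S$ of a $B$-extension $G$ satisfies $|S \cap V_{H_r}| \geq |C_r| = 2r+4$. The Twins Lemma alone is not enough, because for $r \geq 2$ the parallel vertices $a^1_i$ and $a^2_i$ (respectively $b^1_i$ and $b^2_i$) no longer share open neighborhoods. The plan is instead to assemble $2r+4$ distinguishing obligations with pairwise disjoint witness sets inside $V_H$, combining three families of pairs. First, each symmetric pair $(a^1_i, a^2_i)$ with $i \in \intset{r}$ has equal distances to every vertex outside the two parallel $b_0$-to-$a_0$ paths (all shortest routes exit through the equidistant junctions $a_0$ or $b_0$), so property $(\alpha)$ together with the $r$-\aIC{} symmetric-difference criterion narrows each witness to the opposite-index pair $\{a^1_{r+1-i}, a^2_{r+1-i}\}$; second, an analogous argument applies on the $b$-side via the pairs $(b^1_i, b^2_i)$; third, the pendant-path pairs $(a_{r+1}, a_r)$ and $(b_{r+1}, b_r)$ have, after direct computation, unique $V_H$-distinguishers $a_0$ and $b_0$ respectively (pinning the two junctions), and further consecutive-vertex or cross-path pairs pin the two remaining slots near the endpoints $a_{r+1}, b_{r+1}$. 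The delicate step will be verifying that these obligations jointly contribute $2r+4$ distinct forced vertices, requiring a careful accounting of possible overlaps between the witness sets of the different families (in particular between pairs $(a_i, a_{i-1})$ on $P_a$ and pairs $(a^1_j, a^2_j)$, whose witness sets both involve $\{a^1_\cdot, a^2_\cdot\}$). Bipartiteness and planarity of the extensions then follow by direct inspection of the six-path layout.
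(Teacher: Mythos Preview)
Your outline for $(p_l)$, $(p_d)$, $(p_h)$, $(p_b)$ and for the first $2r+2$ forced vertices in $(p_s)$ matches the paper's argument: the pairs $(a^1_i,a^2_i)$ and $(b^1_i,b^2_i)$ have symmetric differences exactly $\{a^1_{r+1-i},a^2_{r+1-i}\}$ and $\{b^1_{r+1-i},b^2_{r+1-i}\}$, and the pairs $(a_{r+1},a_r)$, $(b_{r+1},b_r)$ pin $a_0$ and $b_0$.

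The gap is in obtaining the last two forced vertices. Your proposed ``consecutive-vertex or cross-path pairs'' do not supply fresh constraints: for instance $N_r[a_r]\,\Delta\,N_r[a_{r-1}]=\{a^1_r,a^2_r\}$ (for $r\ge 2$), which is already one of the witness sets counted above, and the cross-path pair $(a_{r+1},b_{r+1})$ has symmetric difference $\{a_1,\dots,a_{r+1}\}\cup\{b_1,\dots,b_{r+1}\}$, forcing only one vertex in this union rather than one on each side. The overlap you flag as ``delicate'' is in fact fatal for this route: the set $\{a_0,b_0,b_{r+1}\}\cup\{a^1_i,b^1_i:i\in\intset{r}\}$ of size $2r+3$ already $f$-distinguishes every pair of $H_r$ (here $a_{r+1}$ is the unique vertex with empty trace), so distinguishing constraints alone cannot push $|S\cap V_{H_r}|$ beyond $2r+3$.

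The paper closes the count differently, by invoking the $r$-\emph{domination} requirement that is part of the definition of a $(f,r)$-distance identifying set. Since $N_r[a_{r+1}]=\{a_1,\dots,a_{r+1}\}$ and $N_r[b_{r+1}]=\{b_1,\dots,b_{r+1}\}$ both lie entirely inside $V_{H_r}$ and must each meet $S$, two further vertices are forced in regions disjoint from all previously used witness sets, giving $|S\cap V_{H_r}|\ge 2r+4=|C_r|$. Replace your ``consecutive-vertex or cross-path'' step by this domination argument.
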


\begin{proof}
 Consider $G$ a $\BIC$-extension of $\HIC$.
 The set of $\BIC$-adjacent vertices of $G$ is denoted $N_{\BIC}$.
 Clearly property  $(p_d)$ is satisfied as $C$ is an $r$-dominating set of \mbox{$G[V_{\HIC} \cup N_{\BIC}]$}.
 
 First, let us prove that $(\HIC,\BIC,\CIC)$ verifies property $(p_b)$, that is $\CIC$ distinguishes $V_{\HIC}$ and $V_G$.
 
 Observe that the only vertices $x$ such that $d(x,b_{r+1}) \le r$ are the vertices $b_i$ for $i \in \intset{r+1}$. So the pairs $(b_i, x)$ for $x \not= b_j$ for some $j \in \intset{r+1}$ are distinguished by $b_{r+1}$.
 In a similar way, we show that the pairs $(a_i,x)$ for $x \not= a_j$ for a $j \in \intset{r+1}$ are distinguished by $a_{r+1}$.
 
 Let $i$ and $j$ be two integers such that $0\leq i < j\leq r$, and suppose that we have two vertices $x$ and $y$ such that $x$ is either $b^1_j$ or $b^2_j$, and $y$ is either $b^1_i$ or $b^2_i$. Then $a^1_{j-1}$ distinguishes $x$ and $y$ because $a^1_{j-1} \in N_r[y] \Delta N_r[x]$. The same reasoning holds when $x \in \{a^1_j, a^2_j\}$ and $y \in \{a^1_i ,a^2_i\}$. Furthermore, we have $N_r[b_{r+1}] \cap \CIC = \{b_{r+1}\}$ and there are no other vertices $x$ such that $N_r[x] \cap \CIC = \{b_{r+1}\}$. So $b_{r+1}$ is distinguished from any other vertices of $G$. The same reasoning proves that $a_{r+1}$ is distinguished from any other vertices too.
 
 The vertex $a_0$ distinguishes any vertex of $\{b_0\}\cup \{b^1_i, b^2_i \colon i \in \intset{r}\}$ from any vertex of $\{a_0\}\cup \{a^1_i, a^2_i \colon i \in \intset{r}\}$.
 
 Now, given any vertex $x$ in $\{a^1_i, a^2_i \colon i\in \intset{r}\}$ (resp. $(b^1_i, b^2_i \colon i\in \intset{r}$), one can notice that $b^1_r$ (resp. $a^1_r$) distinguishes $x$ and $b_0$.
 
 Given an integer $i\in\intset{r}$, $b^1_{r-i+1}$ distinguishes $b^1_i$ and $b^2_i$ because $d(b^1_i,b^1_{r-i+1})=|r-2i+1| \in \intset{r-1}$ and $d(b^1_{r-i+1}, b^2_i) > r$. In the same way, $a^1_i$ and $a^2_i$ are distinguished by $a^1_{r-i+1}$. The vertices $a_0$ and $a_i$ for $i\in \intset{r}$ are distinguished by $a^1_1$. So, at this stage, we proved that $\CIC$ is a distinguishes $\HIC$.

 Let $x$ be a vertex of $V_{G\setminus \HIC}$ and $y$ a vertex of $V_{\HIC}$. If $y\in \{b^1_i, b^2_i, b_i \colon i\in \intset{r}\} \cup \{b_0\}$, then $b_0 \in N_r[x] \Delta N_r[y]$ and so $x$ and $y$ are distinguished. Otherwise, if $y$ is $b_{r+1}$ (resp. $a_{r+1}$), then $b_{r+1}$ (resp. $a_{r+1}$) distinguishes $x$ and $y$. Otherwise, $a_0$ distinguishes the two vertices. Then $(\HIC,\BIC, \CIC)$ verifies property $(p_h)$.
 
 Let us now prove that $\HIC$ distinguishes $N_{\BIC}$ and $V_{G\setminus\HIC} \setminus N_{\BIC}$. Let $x$ be a vertex of the former and $y$ be a vertex of the latter. Then $b_0 \in N_r[x] \Delta N_r[y]$ since $N_{\BIC}$ contains the only vertices with neighbours in $\HIC$. The vertices $x$ and $y$ are distinguished.

 Now, let $S$ be a $(f,r)$-distance identifying set of $G$, where $f$ is defined such that $f_G[w](u,v) = \true$ if $w\in N_r[u] \Delta N_r[v]$ for relevant inputs $G$, $u$, $v$ and $w$. We want to prove property $(ps)$ \emph{i.e.} we have $|\CIC| \leq |S \cap V_{\HIC}|$.  As $N_r[b_r] \Delta N_r[b_{r+1}] = \{ b_0 \}$, then $b_0$ is the only vertex which can distinguish $b_r$ and $b_{r+1}$, then $b_0 \in S$. Similarly with $a_r$ and $a_{r+1}$, we must have $a_0 \in S$.
 For every $i \in \intset{r}$, we have $N_r[b^1_i] \Delta N_r[b^2_i] = \{ b^1_{r-i},b^2_{r-i} \}$ then either $b^1_{r-i}$ or $b^2_{r-i}$ must be in $S$. The same reasoning on $a^1_i$ and $a^2_i$ implies that either $a^1_{r-i}$ or $a^2_{r-i}$ must be in $S$.
 Furthermore, as $N_r[b_{r+1}] \cap S$ (resp. $N_r[a_{r+1}] \cap S $) cannot be empty by definition of an identifying code, then there exists $i\in[r+1]$ such that $b_i \in S$ (resp. $a_i \in S$).
 We conclude that there are at least $2r+4$ vertices of $\HIC$ (that is the size of $\CIC$) in $S$, proving that property $(p_s)$ holds. This proves that $(\HIC,\BIC, \CIC)$ is a $(f, r)$-gadget. By construction, we can easily see that $(\HIC,\BIC, \CIC)$ verifies the property of local gadgets. Therefore, this is also a local-gadget.

\end{proof}
With Propositions~\ref{prop:1cgadget} to~\ref{prop:rlgadget}, we can now prove the Theorems~\ref{thm:CPlowerbound} to~\ref{thm:Glowerbound}.
\begin{proof}[Proof of Theorems~\ref{thm:CPlowerbound} and~\ref{thm:Glowerbound} for each $1$-\layered{} identifying function $f$ and $r \in \inftyset$] ~\\ We first suggest a reduction from \textsc{Planar} \HS{} to $(f,r)$-\aDIS{} based on the bipartite planar $1$-\layered{} gadget $(H,B,C)$. Let $(\Omega,\mathcal{S})$ be an instance of \textsc{Planar} \HS{} with $|\Omega| = n$ and $|\mathcal{S}| = m$ such that $m = \mathcal{O}(n)$. According to Proposition~\ref{prop:graphsize}, the bipartite apex graph $G = \Phi^*[H,B](\Omega,\mathcal{S})$ has size $n'$ linear in $n+m = \mathcal{O}(n)$ and may be built in polynomial-time in its size.
 Recall that $(H,B,C)$ is a $(f,r)$-gadget by Proposition~\ref{prop:1cgadget}. By Theorem~\ref{thm:setsize}, $G$ admits a $(f,r)$-distance identifying set of size $k' = k+|C|(n+m)$ if an only if $\mathcal{S}$ admits an hitting set of size $k$.
 Thus, an algorithm solving $(f,r)$-\aDIS{} in $2^{o(\sqrt{n'})}$ would solve \textsc{Planar} \HS{} in time $2^{o(\sqrt{n})}$, a contradiction to Theorem~\ref{thm:planarHS} (assuming ETH).
 
 We adapt the previous argumentation to get a reduction from \HS{} to $(f,r)$-\aDIS{}, the instance $(\Omega,\mathcal{S})$ belonging now to the \HS{} problem. According to Proposition~\ref{prop:graphsizeG}, the bipartite graph $G = \Psi[H,B,1](\Omega,\mathcal{S})$ has size $n'$ linear in $n+m = \mathcal{O}(n)$ and may also be built in polynomial-time in its size.
 By Theorem~\ref{thm:setsizeG}, $G$ admits a $(f,r)$-distance identifying set of size $k' = k+|C|(\len{n+1}+l_m)$ if an only if $\mathcal{S}$ admits an hitting set of size $k$.
 Thus, an algorithm solving $(f,r)$-\aDIS{} in $2^{o(n')}$ would solve \HS{} in time $2^{o(n)}$, contradicting Theorem~\ref{thm:hs} when assuming ETH. Moreover, a parameterized algorithm solving $(f,r)$-\aDIS{} in $2^{\mathcal{O}(k)} \cdot {n'}^{\mathcal{O}(1)}$ would be in contradiction with Theorem~\ref{thm:w2bound} when assuming $\W{2} \neq \FPT{}$.
\end{proof}

\begin{proof}[Proof of Theorems~\ref{thm:LPlowerbound} and~\ref{thm:Glowerbound} for each $r$-local identifying function $f$] First, we suggest a reduction from \textsc{Planar} \HS{} to $(f,r)$-\aDIS{}. Assuming the existence of a (bipartite) planar local $(f,r)$-gadget $(H,B,C)$. Let $(\Omega,\mathcal{S})$ be an instance of \textsc{Planar} \HS{} with $|\Omega| = n$ and $|\mathcal{S}| = m$ such that $m = \mathcal{O}(n)$. According to Proposition~\ref{prop:graphsize}, the (bipartite) planar graph $G = \Phi[H,B,r](\Omega,\mathcal{S})$ has size $n'$ linear in $n+m = \mathcal{O}(n)$ and may be built in polynomial-time in its size.
 By Theorem~\ref{thm:setsize}, $G$ admits a $(f,r)$-distance identifying set of size $k' = k+|C|(n+m)$ if and only if $\mathcal{S}$ admits an hitting set of size $k$.
 Thus, an algorithm solving $(f,r)$-\aDIS{} in $2^{o(\sqrt{n'})}$ would solve \textsc{Planar} \HS{} in time $2^{o(\sqrt{n})}$, a contradiction to Theorem~\ref{thm:planarHS} (assuming ETH).
 
 We adapt the previous argumentation to get a reduction from \HS{} to $(f,r)$-\aDIS{}, the instance $(\Omega,\mathcal{S})$ belonging now to the \HS{} problem. In this case, we only have to assume the existence of a (bipartite) local $(f,r)$-gadget $(H,B,C)$. According to Proposition~\ref{prop:graphsizeG}, the bipartite graph $G = \Psi[H,B,r](\Omega,\mathcal{S})$ has size $n'$ linear in $n+m = \mathcal{O}(n)$ and may also be built in polynomial-time in its size.
 By Theorem~\ref{thm:setsizeG}, $G$ admits a $(f,r)$-distance identifying set of size $k' = k+|C|(\len{n+1}+l_m)$ if and only if $\mathcal{S}$ admits an hitting set of size $k$.
 Thus, an algorithm solving $(f,r)$-\aDIS{} in $2^{o(n')}$ would solve \HS{} in time $2^{o(n)}$, contradicting Theorem~\ref{thm:hs} when assuming ETH. Moreover, a parameterized algorithm solving $(f,r)$-\aDIS{} in $2^{\mathcal{O}(k)} \cdot {n'}^{\mathcal{O}(1)}$ would be in contradiction with Theorem~\ref{thm:w2bound} when assuming $\W{2} \neq \FPT{}$.
\end{proof}

\begin{proof}[Proof of Theorems~\ref{thm:LCPlowerbound} and~\ref{thm:Glowerbound} for each $r$-local $0$-\layered{} identifying function $f$] \hspace{0.3em}
 By Proposition~\ref{prop:rl0cgadget}, the $r$-local $0$-\layered{} gadget is a local bipartite planar $(f,r)$-gadget for every $r$-local $0$-local distance identifying function $f$. Then, Theorems~\ref{thm:LPlowerbound} and~\ref{thm:Glowerbound} for $r$-local identifying function $f$ apply and directly yield the current theorems.
\end{proof}

\section{Conclusion}

In this paper, we showed generic tools to analysis identifying problems and their computational lower bounds.
This study opens some new questions. First of all, we observe that our toolbox does not contain a $r$-local gadget. Does one exist? Furthermore, there is still a gap between the computational lower bound provided by Theorem~\ref{thm:Glowerbound} and the elementary upper bound from Proposition~\ref{prop:upperbound} in the local cases. We wonder if local problems may be solved in $k^{\mathcal{O}(k)} \cdot n^{\mathcal{O}(1)}$. Notice that a polynomial kernel would imply such a complexity (but the reciprocal is not true).
For non-local problems, an \FPT{} upper bound is globally unknown. In particular, $\W{2}$-hard problems like \aMD{} cannot admit \FPT{} algorithms unless $\W{2} = \FPT{}$. Then, which non-local problem is $\W{2}$-hard?
We mention that we actually get a \FPT{} reduction from \HS{} to some scarce non-local problems (however including \aMD{}) proving their $\W{2}$-hardness, but the family of involved problems is not precise nor wide.
Nevertheless, we remark that most of our reductions may be generalized to the oriented version of \DIS{} sometimes even for the strongly connected graphs \---this is due to the fact that the paths in our distance identifying graphs and gadgets may often be seen as oriented\---. Thus, we inform the community that the oriented version of \aMD{} (studied for Cayley graphs in \cite{FehrGO06}) remains $\W{2}$-hard.

\newpage

\end{document}